\newtheorem{proposition}{Proposition}
\newtheorem{assumption}{Assumption}
\newtheorem{theorem}{Theorem}
\newtheorem{lemma}{Lemma}
\newtheorem{problem}{Problem}
\newtheorem{definition}{Definition}
\newcommand{\ang}[1]{\left\langle #1\right\rangle}
\newcommand{\matrices}[1]{\begin{bmatrix} #1\end{bmatrix}}
\newcommand{\R}{\mathbb{R}}
\newcommand{\C}{\mathbb{C}}
\newcommand{\rank}{\mathrm{rank}\, }
\newcommand{\kernel}{\mathrm{Ker}\, }
\newcommand{\image}{\mathrm{Im}\, }
\newcommand{\sign}{\mathrm{sign}}
\newcommand{\B}{\mathscr{B}}
\newcommand{\X}{\mathscr{X}}
\newcommand{\Y}{\mathscr{Y}}
\newcommand{\U}{\mathscr{U}}
\newcommand{\V}{\mathscr{V}}
\newcommand{\Ss}{\mathscr{S}}
\newcommand{\W}{\mathscr{W}}
\newcommand{\M}{\mathscr{M}}
\newcommand{\K}{\mathscr{K}}
\newcommand{\Lap}{\mathcal{L}}
\newcommand{\diag}{\mathrm{diag}}
\newcommand{\col}{\mathrm{col}}
\newcommand{\Pwgi}{P_{W_{g,i}^*}}
\newcommand{\Pwi}{P_{W_{i}^*}}
\newcommand{\Wgi}{W_{g,i}^*}
\def\BibTeX{{\rm B\kern-.05em{\sc i\kern-.025em b}\kern-.08em
    T\kern-.1667em\lower.7ex\hbox{E}\kern-.125emX}}
\begin{document}
\title{Bridging Centralized and Distributed Frameworks in Unknown Input Observer Design}

\author{Ruixuan Zhao, Guitao Yang, Peng Li, Boli Chen
\thanks{R. Zhao and B. Chen are with the Department of Electronic and Electrical Engineering, University College London, London, UK \tt\small(ruixuan.zhao.22@ucl.ac.uk; boli.chen@ucl.ac.uk).}%
\thanks{G. Yang is with the Department of Electrical and Electronic Engineering, Imperial College London, London, UK \tt\small(guitao.yang@imperial.ac.uk).}
\thanks{P. Li is with the School of Mechanical Engineering and Automation at Harbin Institute of Technology, Shenzhen, China \tt\small(lipeng2020@hit.edu.cn).}% 
}

\pagestyle{myheader}
\begin{titlepage}
    \centering
    \vspace*{\fill}
    {\Huge \bfseries Copyright Statement \\[0.5cm]}

    {\large This work has been submitted to the IEEE Control and Decision Conference 2025. Copyright may be transferred without notice, after which this version might no longer be accessible. \\[2cm]}
    \vspace*{\fill}
\end{titlepage}

\clearpage

\maketitle

\begin{abstract}
State estimation for linear time-invariant systems with unknown inputs is a fundamental problem in various research domains. In this article, we establish conditions for the design of unknown input observers (UIOs) from a geometric approach perspective. Specifically, we derive a necessary and sufficient geometric condition for the existence of a centralized UIO. Compared to existing results, our condition offers a more general design framework, allowing designers the flexibility to estimate partial information of the system state. Furthermore, we extend the centralized UIO design to distributed settings. In contrast to existing distributed UIO approaches, which require each local node to satisfy the rank condition regarding the unknown input and output matrices, our method accommodates cases where a subset of nodes does not meet this requirement. This relaxation significantly broadens the range of practical applications. Simulation results are provided to demonstrate the effectiveness of the proposed design.
\end{abstract}

\section{Introduction}
\label{sec:introduction}
%\IEEEPARstart{T}{his} document is a template for \LaTeX.
% The problem of estimating the state of a dynamical system in the presence of unknown inputs has attracted significant attention in the control community, owing to its theoretical and practical relevance. In many real-world applications, not all system inputs are accessible—either due to physical limitations, such as spatially distributed systems requiring decentralized control, or because certain inputs represent unmeasurable disturbances or actuator faults. Traditional assumptions of fully known inputs often fail in practice, motivating the development of observer designs that can provide reliable state estimates even when inputs are partially or entirely unknown.

The design of unknown input observers (UIOs) for centralized linear time-invariant systems has been extensively studied in the literature \cite{chen1996design,darouach1994full,valcher1999state}, where necessity and sufficiency of the rank condition regarding the unknown input and output matrices have been established. 
More recently, \cite{bejarano2009unknown} has examined the necessary and sufficient conditions for the existence of a UIO by leveraging the concept of strong detectability, which is detailed in \cite{trentelman2002control}.
Alongside these theoretical advancements, the rapid progress in data-driven methodologies within the control community has spurred the development of new frameworks. For instance, recent studies \cite{turan2021data,disaro2024data} have introduced end-to-end data-driven approaches for UIO design. Notably, in \cite{disaro2024equivalence}, the data-driven conditions presented in \cite{turan2021data,disaro2024data} have been shown to be equivalent to those derived from traditional model-based approaches, as established in \cite{chen1996design,darouach1994full,valcher1999state}. 

Recently, the widespread deployment of embedded systems has empowered sensing devices with integrated communication and computation capabilities, enabling the execution of advanced algorithms directly at the sensor level. This development is particularly beneficial for large-scale systems consisting of multiple interconnected components, where the state space is either high-dimensional or spatially distributed. In response to these challenges, several studies \cite{kim2019completely,han2018simple,wang2024split,yang2022sensor,yang2023plug,zhao2024state,ge2022fixed} have investigated the classical distributed state estimation problem under the assumption that the full input signal is available at each local node. Extending the centralized UIO framework \cite{chen1996design,darouach1994full,valcher1999state}, recent works \cite{yang2022state,cao2023distributed,cao2023distributedauto} have developed distributed UIO schemes to address distributed state estimation with local unknown input signals. These approaches, however, require that each local node individually satisfies the rank condition regarding the unknown input and output matrices, which significantly limits their practical applicability.

The novelty of this paper lies in: 1) Based on the geometric approach, we propose a subspace decomposition technique that leads to a novel centralized UIO design associated with the necessary and sufficient condition; and 2)  
    % Compared to the existing works \cite{chen1996design,darouach1994full,valcher1999state}, our condition is more relaxed and less conservative.
Based on our centralized UIO methodology, a novel distributed UIO framework is proposed. Compared with existing work \cite{yang2022state,cao2023distributed,cao2023distributedauto}, our method does not require each local node to meet the rank condition with respect to unknown input and output matrices, which unlocks wider practical applications.

The structure of this article is organized as follows: Section~\ref{sec:preliminaries} introduces the notations, fundamental concepts, and definitions employed throughout the paper. The problem formulation is presented in Section~\ref{sec:problem statement}. Section~\ref{sec:uio} addresses the state estimation problem in the presence of unknown inputs for both centralized and distributed UIOs. Simulation results validating the proposed approaches are provided in Section~\ref{sec:simu}, and concluding remarks are offered in Section~\ref{sec:conclusion}.

\section{Preliminaries}
\label{sec:preliminaries}
%Notation, concepts, and some definitions on geometric approach and graph theory are presented in this section.
\subsection{Notation}
Let $\mathbb{R}$ and $\mathbb{C}$ denote the sets of real and complex numbers, respectively, and let $\mathbb{R}_{>0}$ represent the set of positive real numbers. A symmetric partition of $\mathbb{C}$ is denoted as $\mathbb{C} = \mathbb{C}_g \cup \mathbb{C}_b$ with $\mathbb{C}_g \cap \mathbb{C}_b = \varnothing$, where $\mathbb{C}_g$ contains the ``good'' (e.g., stable) eigenvalues and $\mathbb{C}_b$ contains the ``bad'' (e.g., unstable) eigenvalues. The identity matrix of dimension $n$ is denoted by $I_n$, and $\mathbf{0}$ denotes a zero matrix of appropriate dimensions. The kernel (null space) and image (column space) of the map $A$ are denoted by $\kernel A$ and $\image A$, respectively. The symbols $\|\cdot\|_1$, $\|\cdot\|_2$, and $\|\cdot\|_\infty$ represent the 1-norm, 2-norm, and infinity norm of a vector or matrix, respectively. The Kronecker product is denoted by $\otimes$, and $\uplus$ indicates the union of sets where repeated elements are retained. The sign function is denoted by $\sign(\cdot)$. The notation $\mathrm{col}(M_1, M_2, \ldots, M_n)$ denotes the vertically stacked matrix $[M_1^\top, M_2^\top, \ldots, M_n^\top]^\top$, while ${\rm diag}(M_1, M_2, \ldots, M_n)$ represents the block-diagonal matrix formed from the matrices $M_i$. The pseudoinverse of a matrix $M$ is denoted by $M^\dagger$. The spectrum of a matrix $M$ is denoted by $\kappa(M)$, and $\sigma_{\min}(M)$ denotes the minimum singular value of $M$.

\subsection{Geometric Approach}
\subsubsection{Basic Definitions}
Let \( A: \mathscr{X} \rightarrow \mathscr{X} \) be an endomorphism, and let \( \mathscr{W} \subseteq \mathscr{X} \) be a subspace with insertion map \( W: \mathscr{W} \rightarrow \mathscr{X} \), such that \( \mathscr{W} = \image W \) and \( W \) are monic. A subspace \( \mathscr{W} \subseteq \mathscr{X} \) is said to be {\em invariant} under \( A \) if \( A\mathscr{W} \subseteq \mathscr{W} \). For an invariant subspace \( \mathscr{W} \), the {\em restriction} of \( A \) to \( \mathscr{W} \) is denoted by \( A|{\mathscr{W}}: \mathscr{W} \to \mathscr{W} \).
The set \( \mathscr{W}_x = x + \mathscr{W} \), for \( x \in \mathscr{X} \), is called a {\em coset} of \( \mathscr{W} \) in \( \mathscr{X} \), and \( x \) is referred to as its representative. The set of all such cosets is denoted by the {\em quotient space} \( \mathscr{X}/\mathscr{W} \coloneqq \{x + \mathscr{W} : x \in \mathscr{X}\} \), which can also be written as \( \frac{\mathscr{X}}{\mathscr{W}} \). The {\em induced map} on the quotient space, denoted \( A|{\mathscr{X}/\mathscr{W}} \), satisfies \( (A|{\mathscr{X}/\mathscr{W}}) P = P A \), where \( P: \mathscr{X} \to \mathscr{X}/\mathscr{W} \) is the canonical projection.
Given a linear map \( C: \mathscr{X} \to \mathscr{Y} \) and a subspace \( \mathscr{S} \subseteq \mathscr{Y} \), the {\em inverse image} of \( \mathscr{S} \) under \( C \) is defined as
\(
C^{-1} \mathscr{S} \coloneqq \{x \in \mathscr{X} \mid Cx \in \mathscr{S} \} \subseteq \mathscr{X}.
\)
% where \( C^{-1} \) is interpreted as a function from the set of subspaces of \( \mathscr{Y} \) to those of \( \mathscr{X} \).
For subspaces \( \mathscr{R}, \mathscr{S} \subseteq \mathscr{X} \), their sum and intersection are defined as
\(
\mathscr{R} + \mathscr{S} \coloneqq \{r + s \mid r \in \mathscr{R}, s \in \mathscr{S} \},\ \text{and} \ \mathscr{R} \cap \mathscr{S} \coloneqq \{x \in \mathscr{X} \mid x \in \mathscr{R} \text{ and } x \in \mathscr{S} \}.
\)
The notation \( \mathscr{R} \oplus \mathscr{S} \) indicates that the subspaces $\mathscr{R}$ and $\mathscr{S}$ being added are independent. The orthogonal complement of a subspace \( \mathscr{V} \) is denoted by \( \mathscr{V}^\perp \), and the isomorphism between vector spaces \( \mathscr{V} \) and \( \mathscr{W} \) is indicated by \( \mathscr{V} \simeq \mathscr{W} \). \( \mathrm{Mat}(A) \) denotes the matrix representation of the map \( A \).

\subsubsection{$(C,A)$-invariant Subspace}
Let $A:\X\rightarrow \X$ and $C:\X\rightarrow \Y$. A subspace $\W \subseteq \X$ is said to be $(C,A)$-invariant if there exists a map $L:\Y\rightarrow \X$ such that
\begin{equation}\label{eq:def(C,A)-inv}
    (A+LC)\W \subseteq \W.
\end{equation}
We denote by $\mathbf{L}(\W)$ the set of all maps $L$ satisfying \eqref{eq:def(C,A)-inv}, assuming $A$ and $C$ are clear in the context.
%The notation $L\in \mathbf{L}(\W)$ is referred to as ``$L$ is a friend of $\W$''. 
Given a subspace $\B\subseteq \X$, we define $\underline{\W}(C,A;\B)$ as the family of $(C,A)$-invariant subspaces containing $\B$, which
%is closed under the operation of subspace intersection and thus 
has an infimal element denoted by $\W^*(C,A;\B)$ or simply $\W^*(\B)$ if $A$ and $C$ are clear from context.
\subsubsection{Unobervability Subspace}
A subspace $\Ss \subseteq \X$ is said to be an unobservability subspace if there exists $L:\Y \rightarrow \X$ (output injection) and $H:\Y \rightarrow \Y$ (measurement mixing) such that
\begin{equation}\label{eq:def_Ss}
    \Ss = \ang{\kernel HC\,|\,A+LC }.
\end{equation}
 Note that the \textit{unobservability subspace} is a different concept from the \textit{unobservable subspace}\footnote{The unobservable subspace is defined as $\ang{\K\,|\, A}\coloneqq \K \cap  A^{-1} \K \cap A^{-2} \K \cap \cdots \cap A^{-n+1} \K$ with $\K = \kernel C$.}.

We denote by $\mathbf{L}(\Ss)$ the set of all maps $L$ that satisfy \eqref{eq:def_Ss}, assuming $A$ and $C$ are clear in the context.
%The notation $L\in \mathbf{L}(\Ss)$ is similarly referred to as ``$L$ is a friend of $\Ss$''. 
Given a subspace $\B\subseteq \X$, the family of unobservability subspaces containing $\B$ is denoted by $\underline{\Ss}(C,A;\B)$, which also has an infimal element denoted by $\Ss^*(C,A;\B)$ or simply $\Ss^*(\B)$ when the context is clear.

\section{Problem Statement}
\label{sec:problem statement}
Consider the following linear time-invariant (LTI) system
\begin{equation}\label{eq:sys}
\left\{
\begin{aligned}
        &\dot x (t) = Ax(t) + Bu(t) \, ,\\
        &y(t) = C x(t)\, ,
\end{aligned}\right.
\end{equation}
where ${A\in \mathbb{R}^{n\times n}}$, ${B\in \mathbb{R}^{n\times m}}$ and $C\in\mathbb{R}^{p\times n}$ are known system matrices. ${x \in \mathbb{R}^n}$ is the state vector, ${u\in \mathbb{R}^m}$ is the control input, and $y\in\mathbb{R}^p$ is the output measurement.
% measured by a group of sensors
% \begin{equation}\label{eq:output}
%     y_i(t) = C x(t) \, ,\ i\in\mathbf{N}\, ,
% \end{equation}
% where $y_i \in \mathbb{R}^{p_i}$ is the output measurement at node~$i$, $C \in \mathbb{R}^{p_i\times n}$ with $\sum^N_{i=1}p_i = p$. 
\begin{problem}\label{pro:uio}
Partition system's input signal into
    \begin{equation}
        Bu = \Acute{B}\Acute{u} + \Bar{B}\Bar{u}
    \end{equation}
with $\Acute{B} \in \mathbb{R}^{n\times l}$, $\bar{B} \in \mathbb{R}^{n\times (m - l)}$, where $m-l \leq p \leq n$. $\acute u \in \mathbb{R}^{l}$, and $\bar u \in \mathbb{R}^{m - l}$ are the known and unknown input signals, respectively. 

The objective is to design a UIO to estimate the state vector $x$ of \eqref{eq:sys} while having only access to the partially known input signal $\acute u$, and measurement output $y$.
\end{problem}
\begin{definition}
Let $\hat{x}$ be the estimate of $x$ produced by the state observer $\mathcal{O}$. $\mathcal{O}$ is a centralized UIO for system \eqref{eq:sys} associated with {\em Problem~\ref{pro:uio}} if 
$
    \lim_{t\rightarrow\infty}\|x-\hat x\|=0.
$
\end{definition}

\begin{problem}\label{pro:duio}
    For a large-scale system \cite{kim2019completely,han2018simple,wang2024split,yang2022sensor,yang2023plug}, the system output could be sensed by a group of sensors
    \begin{equation}
        y_i=C_i x,\ i\in\mathbf{N}
    \end{equation}
    with $y=\col (y_1,y_2,\cdots,y_N)$, where $y_i\in\mathbb{R}^{p_i}$, $\sum_{i=1}^N p_i=p$ and $C=\col (C_1,C_2,\cdots,C_N)$.
The sensors communicate via a network represented by an undirected graph denoted by $\mathcal{G} = (\mathbf{N},\mathcal{E},\mathcal{A})$, where $\mathbf{N}= \{1,2,\dots,N\}$ is a finite nonempty set of nodes of the graph (describing the networked observer containing $N$ local sensors), $\mathcal{E}\subseteq \mathbf{N} \times \mathbf{N}$ represents the edges of the graph (describing communication among the nodes), and $\mathcal{A}=[a_{ij}] \in \mathbb{R}^{N\times N}$ is the adjacency matrix, where $a_{ij}=a_{ji}=1$ if there exists an edge between node $i$ and node $j$, and $a_{ij}=a_{ji}=0$ otherwise.
$\mathcal{L}$ is defined as the Laplacian matrix associated with graph $\mathcal{G}$.
    
At each node $i$, the input term of the system can be partitioned by
    \begin{equation}\label{eq:inputdecomp}
    Bu = B_i u_i +\bar{B}_i \bar{u}_i,
    \end{equation}
with $B_i \in \mathbb{R}^{n\times l_i}$, $\bar{B}_i \in \mathbb{R}^{n\times (m - l_i)}$, where $m-l_i \leq p_i \leq n$. $u_i \in \mathbb{R}^{l_i}$ and $\bar u_i \in \mathbb{R}^{m - l_i}$ are the known and unknown input signals for node $i$, respectively. 

The objective is to design a Distributed UIO $\{\mathcal{O}_i\}_{i\in\mathbf{N}}$ to reconstruct the state vector $x$ at each node, while each node $i$, $i\in\mathbf{N}$, has only access to its local known control input $u_i$ and local measurement $y_i$.
\end{problem}
\begin{definition}
Let $\hat{x}_i$ be the estimate of $x$ produced by local observer $\mathcal{O}_i$. $\{\mathcal{O}_i\}_{i\in\mathbf{N}}$ is a distributed UIO for system \eqref{eq:sys} associated with {\em Problem~\ref{pro:duio}} if for all $i\in\mathbf{N}$, 
$
    \lim_{t\rightarrow\infty}\|x-\hat x_i\|=0.
    % \, i\in\mathbf{N}
    $
\end{definition}

\section{Unknown Input Observer}\label{sec:uio}
To introduce our novel UIO design, we first introduce a subspace decomposition based on the geometric approach.
\subsection{$\W_g^*$ Subspace Decomposition}\label{sec:space_decomposition}
To estimate the most information of system states in the sense of subspace, we aim to identify the infimal $(C, A)$-invariant subspace $\W_{g}^*$ that contains $\image \bar{B}$, while simultaneously allowing the assignment of the spectrum of the induced map $A_L|\X/\W_{g}^*$\footnote{To simplify the notation, we define $A_{L} \coloneqq A+LC$, where $L: \Y \to \X$ is the output injection map.} to lie entirely within the desirable region of the complex plane—referred to as the ``good'' part $\C_g$—through an appropriate choice of the output injection map $L$. This construction allows for the estimation of $P_{W_{g}^*} x$ without interference from the unknown input $\bar u$, where $P_{W_{g}^*} : \X \to \X/\W_{g}^*$ is the canonical projection. Moreover, this projection is designed to minimize information loss, in the sense that the dimension of $\W_g^*=\kernel P_{W_{g}^*}$ is as small as possible.

Before identifying $\W_g^*$, we begin with the analysis of the invariant zeros of the system, which is associated with the quotient space $\Ss^*/\W^*$.
Let $\beta(\lambda)$ denote the minimal polynomial of $A_L|\Ss^{*}/\W^*$. We factorize $\beta(\lambda)$ as $\beta(\lambda) = \beta_{g}(\lambda) \beta_{b}(\lambda)$, where the zeros of $\beta_{g}(\lambda)$ in $\C$ lie within $\C_g$, while those of $\beta_{b}(\lambda)$ lie within $\C_b$, and write
%
% \begin{align*}
    $\bar{\X}_{g}^* \coloneqq \frac{\Ss^{*}}{\W^{*}}\  \bigcap\  \kernel \beta_{g}(A_L|\Ss^*/\W^*),\ 
    \bar{\X}_{b}^* \coloneqq \frac{\Ss^{*}}{\W^{*}}\  \bigcap \ \kernel \beta_{b}(A_L|\Ss^*/\W^*).$
% \end{align*}
%
% Furthermore, the fact that $\beta_{g}$ and $\beta_{b}$ are coprime implies that
Then, we can obtain
\begin{equation}\label{eq:Xa_Xb}
    \frac{\Ss^{*}}{\W^{*}} = \bar{\X}_{g}^*\oplus \bar{\X}_{b}^*.
\end{equation}
This decomposition effectively ``splits'' $\Ss^*/\W^*$ associated with invariant zeros into two sub-quotient subspaces, each corresponding to the ``good'' and ``bad'' modes, respectively. We now introduce the following lemma to determine $\W_{g}^*$.
% , which can be viewed as the dualization of \cite[Lemma~5.8]{wonham1985linear}.
%
\begin{lemma}\label{thm:Wg}
    \cite{zhao2025DUIO} Let $P_{W^*}:\X \to \X/\W^*$ be the canonical projection, where $\W^*$ is the infimal $(C,A)$-invariant subspace.
    Then, the subspace $\W_{g}^*$ defined as
    \begin{equation}\label{eq:Wg}
        \W_{g}^*
        \coloneqq 
        P_{W^*}^{-1}\bar{\X}_{b}^* 
    \end{equation}
    is the infimal $(C, A)$-invariant subspace containing $\image \bar{B}$, while enabling the assignment of $\kappa(A_L|\X/\W_{g}^*)$ into the partial complex plane $\mathbb{C}_g$.
\end{lemma}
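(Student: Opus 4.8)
The plan is to verify in turn the three properties that characterize $\W_g^*$---$(C,A)$-invariance together with $\image\bar B\subseteq\W_g^*$, stabilizability of the induced quotient map, and infimality---exploiting throughout the flag $\W^*\subseteq\W_g^*\subseteq\Ss^*$. The inclusion $\W^*\subseteq\W_g^*$ holds because $0\in\bar{\X}_b^*$ forces $\W^*=P_{W^*}^{-1}(0)\subseteq P_{W^*}^{-1}(\bar{\X}_b^*)=\W_g^*$, while $\W_g^*\subseteq\Ss^*$ holds because $\bar{\X}_b^*\subseteq\Ss^*/\W^*$. Since $\image\bar B\subseteq\W^*$ by the defining property of the infimal $(C,A)$-invariant subspace, the containment $\image\bar B\subseteq\W_g^*$ is then immediate.

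For $(C,A)$-invariance I would fix $L\in\mathbf{L}(\W^*)\cap\mathbf{L}(\Ss^*)$; this intersection is nonempty by the standard compatibility of the infimal $(C,A)$-invariant and infimal unobservability subspaces. Then $A_L$ descends to $\bar A_L\coloneqq A_L|\X/\W^*$ and leaves $\Ss^*/\W^*$ invariant, so $A_L|\Ss^*/\W^*$ is literally the restriction of $\bar A_L$ to $\Ss^*/\W^*$. Because $\bar{\X}_b^*=(\Ss^*/\W^*)\cap\kernel\beta_b(A_L|\Ss^*/\W^*)$ is the kernel of a polynomial in that restricted map, it is invariant under it, which is exactly $\bar A_L\,\bar{\X}_b^*\subseteq\bar{\X}_b^*$. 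Pulling this back through the intertwining $\bar A_L P_{W^*}=P_{W^*}A_L$ yields $A_L\W_g^*\subseteq\W_g^*$, i.e. $L\in\mathbf{L}(\W_g^*)$.

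For stabilizability I would first record the identity $\Ss^*(\W_g^*)=\Ss^*$: the inclusion $\W_g^*\subseteq\Ss^*$ gives $\Ss^*(\W_g^*)\subseteq\Ss^*$, while $\image\bar B\subseteq\W_g^*$ gives $\Ss^*=\Ss^*(\image\bar B)\subseteq\Ss^*(\W_g^*)$. The standard spectral-assignment theorem, applied with the unobservability subspace $\Ss^*$ containing $\W_g^*$, then splits $\kappa(A_L|\X/\W_g^*)=\kappa(A_L|\Ss^*/\W_g^*)\uplus\kappa(A_L|\X/\Ss^*)$ over $L\in\mathbf{L}(\W_g^*)\cap\mathbf{L}(\Ss^*)$, with the second block freely (symmetrically) assignable and hence placeable in $\C_g$. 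The first block is fixed, and via the isomorphism $\Ss^*/\W_g^*\simeq(\Ss^*/\W^*)/\bar{\X}_b^*\simeq\bar{\X}_g^*$ coming from \eqref{eq:Xa_Xb} it equals the spectrum of $A_L|\Ss^*/\W^*$ on $\bar{\X}_g^*$, namely the roots of $\beta_g$, which lie in $\C_g$ by construction. Choosing such an $L$ delivers $\kappa(A_L|\X/\W_g^*)\subseteq\C_g$.

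The main obstacle is infimality. Let $\V$ be any $(C,A)$-invariant subspace with $\image\bar B\subseteq\V$ admitting some $L'$ with $\kappa(A_{L'}|\X/\V)\subseteq\C_g$; I must show $\W_g^*\subseteq\V$. First, $\W^*\subseteq\V$ by infimality of $\W^*$, so $\V=P_{W^*}^{-1}\bar\V$ with $\bar\V=\V/\W^*$, and the goal reduces to $\bar{\X}_b^*\subseteq\bar\V$. The crux is that the invariant zeros $\kappa(A_L|\Ss^*/\W^*)$ are an intrinsic, output-injection-independent attribute of $(C,A,\image\bar B)$: applying the spectral split to $\V$ and $\Ss^*(\V)\supseteq\Ss^*$ shows the fixed block $\kappa(A_{L'}|\Ss^*(\V)/\V)$ must lie in $\C_g$, so every bad invariant zero (a root of $\beta_b$, with modal subspace $\bar{\X}_b^*$) must already be absorbed into $\V$ modulo $\W^*$; otherwise that zero would persist as an eigenvalue of $A_{L'}|\X/\V$ in $\C_b$, contradicting stability. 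Making this persistence rigorous---reconciling the two distinct injections $L$ and $L'$ by appealing to the injection-invariance of the zero structure rather than to any single induced map on $\X/\W^*$---is the delicate step; granting it, $\bar{\X}_b^*\subseteq\bar\V$ and hence $\W_g^*\subseteq\V$, establishing infimality.
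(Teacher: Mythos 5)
You should first note that this paper contains no proof of Lemma~\ref{thm:Wg} to compare against: the lemma is imported verbatim from \cite{zhao2025DUIO}, so your argument can only be assessed on its own merits and against the geometric machinery the paper deploys elsewhere (notably in the proof of Proposition~\ref{prop:equivalence}). On that basis, your first three steps are sound: the flag $\W^*\subseteq\W_g^*\subseteq\Ss^*$, the pull-back argument for $(C,A)$-invariance (using $P_{W^*}\W_g^*=\bar{\X}_b^*$ and the intertwining $\bar A_L P_{W^*}=P_{W^*}A_L$), the identity $\Ss^*(\W_g^*)=\Ss^*$, and the identification of the fixed spectral block via $\Ss^*/\W_g^*\simeq(\Ss^*/\W^*)/\bar{\X}_b^*\simeq\bar{\X}_g^*$ are all correct. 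Two caveats you should make explicit: placing the assignable block $\kappa(A_L|\X/\Ss^*)$ in $\C_g$ requires detectability of $(C,A)$, which is implicit in the lemma; and the very definition of $\bar{\X}_b^*$ presupposes that the induced map $A_L|\Ss^*/\W^*$ itself---not merely its spectrum---is independent of $L\in\mathbf{L}(\W^*)$. The latter follows from $\Ss^*+\kernel C=\W^*+\kernel C$ (exactly the identity \cite[Eq.~(2.62)]{massoumnia1986geometric} that the paper invokes in Proposition~\ref{prop:equivalence}), since then $C\Ss^*=C\W^*$ and $(L_1-L_2)C\W^*\subseteq\W^*$ for any two friends $L_1,L_2$ of $\W^*$.

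The genuine gap is exactly where you flag it: infimality. As written, you claim an unabsorbed bad zero ``would persist as an eigenvalue of $A_{L'}|\X/\V$,'' but $L'$ is only a friend of $\V$; it need not leave $\W^*$ or $\Ss^*$ invariant, so $\bar{\X}_b^*$ has no meaning relative to $A_{L'}$ and the persistence claim is unsupported. The step closes as follows. (i) Since $\W^*\subseteq\V$, the sum $\W^*+\V=\V$ is $(C,A)$-invariant, so by the standard compatibility result for conditioned invariants there exists a common friend $L''\in\mathbf{L}(\W^*)\cap\mathbf{L}(\V)$; friends of $\W^*$ and of $\V$ are automatically friends of $\Ss^*=\Ss^*(\W^*)$ and of $\Ss^*(\V)$, so all four subspaces are $A_{L''}$-invariant. (ii) The friend-independence argument above, applied to the pair $\bigl(\V,\Ss^*(\V)\bigr)$ via $\Ss^*(\V)+\kernel C=\V+\kernel C$, shows the induced map $\tilde A\coloneqq A_L|\Ss^*(\V)/\V$ is the same for every friend of $\V$; hence $\kappa(\tilde A)=\kappa\bigl(A_{L'}|\Ss^*(\V)/\V\bigr)\subseteq\kappa(A_{L'}|\X/\V)\subseteq\C_g$, even though you compute with $L''$. (iii) Now run everything through the single map $A_{L''}$: one has $\bar{\X}_b^*\subseteq\Ss^*/\W^*\subseteq\Ss^*(\V)/\W^*$, and the canonical projection $q:\Ss^*(\V)/\W^*\to\Ss^*(\V)/\V$ intertwines the maps induced by $A_{L''}$. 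For $v\in\bar{\X}_b^*$, $\beta_b\bigl(A_{L''}|\Ss^*/\W^*\bigr)v=0$ implies $\beta_b(\tilde A)q(v)=0$; but $\beta_b(\tilde A)$ is invertible because $\kappa(\tilde A)\subseteq\C_g$ while the roots of $\beta_b$ lie in $\C_b$. Hence $q(v)=0$, i.e., $\bar{\X}_b^*\subseteq\V/\W^*$, and therefore $\W_g^*=P_{W^*}^{-1}\bar{\X}_b^*\subseteq\V$. So your route is viable and, once supplemented, complete; what is missing is precisely the common-friend construction plus the friend-independence of the fixed induced map, not a different idea.
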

% \begin{proof}
%     The proof is attached in the Appendix.
% \end{proof}
%
\begin{figure}[htp] 
\centering
\scalebox{0.9}{
    \begin{tikzpicture}

% this makes an (N+1)x(M+1) grid (N=2, M=3) of named coordinates as follows:
% (N0) .. (Nj) .. (NM)
%  |
% (i0) .. (ij) .. (iM)
%  |
% (00) .. (0j) .. (0M)

\def\xgap{3.25}
\def\ygap{1.5}
\foreach \i in {0,...,3}{
    \foreach \j in {0,...,3}{
        \coordinate (\i\j) at (\i*\xgap, \j*\ygap);
    }
}
% this bit is useful to rotate the labels
\pgfmathatantwo{\ygap}{\xgap}
\def\labangle{\pgfmathresult}

\tikzset{label/.style = {font=\small, midway}}

% the grid can be directly addressed to add labels at each point
\node (barU) at (01) {$\bar \U$};
\node (XmodWg) at (10) {$\X / \W^*_{g}$};
\node (XmodWg2) at (20) {$\X / \W^*_{g}$};
\node (X) at (11) {$\X$};
\node (X2) at (21) {$\X$};
\node (Wg) at (12) {$\W^*_{g}$};
\node (Wg2) at (22) {$\W^*_{g}$};

\draw[-latex, dashed] (barU) -- (XmodWg) node [label, below, rotate=-\labangle] {$0$};
\draw[-latex] (barU) -- (X) node [label, above] {$\bar B$};

\draw[-latex] (XmodWg) -- (XmodWg2) node [label, above] {$A_{L}|\X/\W_{g}^*$};
\draw[-latex] (XmodWg) -- (XmodWg2) node [label, below] {spectrum good};
\draw[-latex] (X) -- (X2) node [label, above] {$A_{L}$};
\draw[-latex] (Wg) -- (Wg2) node [label, above] {$A_{L}|\W_{g}^*$};

\draw[-latex] (Wg) -- (X) node [label, right] {$W_{g}^*$};
\draw[-latex] (Wg2) -- (X2) node [label, right] {$W_{g}^*$};

\draw[-latex] (X) -- (XmodWg) node [label, right] {$P_{W_{g}^*}$};
\draw[-latex] (X2) -- (XmodWg2) node [label, right] {$P_{W_{g}^*}$};

\end{tikzpicture}}\\[-1.5ex]
    \caption{Commutative diagram of $\W_{g}^*$ decomposition.}
    \label{fig:Wg_decompose}
\end{figure}
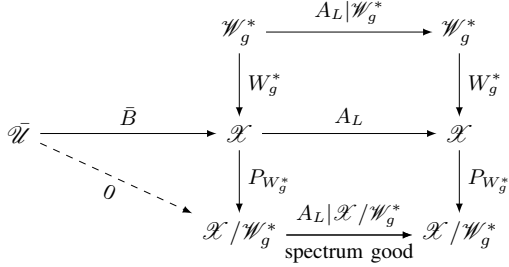
With the desired subspace $\W_{g}^*$ identified, we can now efficiently reconstruct $P_{W_{g}^*}x$ at each node. Notably, this reconstruction remains unaffected by the unknown input $\bar{u}$, as illustrated by the commutative diagram in Fig.~\ref{fig:Wg_decompose}. This subspace decomposition mechanism serves as the foundation for our UIO design, which we introduce in the next subsection.

\subsection{Centralized UIO Design}\label{sec:centralized_UIO}
Based on the subspace decomposition in section~\ref{sec:space_decomposition}, we propose the centralized UIO in the following form
    \begin{equation}\label{eq:cen_uio}
    \begin{aligned}
        &\dot z = \bar{\bar A}_{L}z + P_{W_g^*}\acute B\acute u - P_{W_g^*}Ly,\\
        &\hat{x} = E z + F y
    \end{aligned}
    \end{equation}
where $z\in\mathbb{R}^{n-w_g^*}$ is the auxiliary variable, $\bar{\bar A}_{L}=\mathrm{Mat}(A_{L}|\X/\W_{g}^*)$ and $P_{W_g^*}$ are defined in Section~\ref{sec:space_decomposition}, $E\in\mathbb{R}^{n\times(n-w_g^*)}$ and $F\in\mathbb{R}^{n\times p}$ are specified by \eqref{eq:EPFC}.
\begin{lemma}
    For an LTI system \eqref{eq:sys} associated with {\em Problem~\ref{pro:uio}}, \eqref{eq:cen_uio} is a centralized UIO if and only if
    \begin{equation}
        \W_{g}^*\cap \kernel{C}=0. \label{eq:UIO_con}
    \end{equation}
\end{lemma}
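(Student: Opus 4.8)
The plan is to split the estimation error into two parts — the convergence of the internal variable $z$ to the projected state $P_{W_g^*}x$, which I expect to hold unconditionally, and the static reconstruction of the full state from $z$ and $y$, which is exactly where \eqref{eq:UIO_con} must enter. First I would set $\zeta \coloneqq P_{W_g^*}x$ and compute its dynamics. Differentiating gives $\dot\zeta = P_{W_g^*}(Ax+\acute B\acute u+\bar B\bar u)$, and the unknown input drops out because Lemma~\ref{thm:Wg} gives $\image\bar B\subseteq\W_g^*=\kernel P_{W_g^*}$, so $P_{W_g^*}\bar B=0$. Then, invoking the intertwining identity of the induced map, $\bar{\bar A}_{L}P_{W_g^*}=P_{W_g^*}(A+LC)$, I rewrite $P_{W_g^*}Ax=\bar{\bar A}_{L}\zeta-P_{W_g^*}LCx=\bar{\bar A}_{L}\zeta-P_{W_g^*}Ly$, so that $\dot\zeta=\bar{\bar A}_{L}\zeta+P_{W_g^*}\acute B\acute u-P_{W_g^*}Ly$, which coincides with the $z$-equation of \eqref{eq:cen_uio}. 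Hence the error $e\coloneqq z-\zeta$ obeys the autonomous dynamics $\dot e=\bar{\bar A}_{L}e$, and since Lemma~\ref{thm:Wg} guarantees $\kappa(A_L|\X/\W_g^*)\subset\C_g$, we get $e(t)\to0$ for every trajectory and every input. This stage uses no hypothesis on $C$.

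Second, I would connect $\hat x$ to $x$ by writing $\hat x-x=E(z-\zeta)+(EP_{W_g^*}+FC-I_n)x=Ee+Nx$, with $N\coloneqq EP_{W_g^*}+FC-I_n$. Because $Ee\to0$ from the first stage, we have $\hat x-x\to0$ for all trajectories if and only if $Nx(t)\to0$. For sufficiency, condition \eqref{eq:UIO_con} states $\kernel P_{W_g^*}\cap\kernel C=\W_g^*\cap\kernel C=0$, i.e.\ the stacked map $\col(P_{W_g^*},C)$ is monic; it therefore admits a left inverse $[\,E\ F\,]$, which is precisely a solution of $EP_{W_g^*}+FC=I_n$ (equation \eqref{eq:EPFC}). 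This yields $N=0$ and hence $\hat x-x=Ee\to0$, so \eqref{eq:cen_uio} is a UIO.

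For necessity I would argue the contrapositive: if $\W_g^*\cap\kernel C\neq0$, then $\col(P_{W_g^*},C)$ is not monic, so \eqref{eq:EPFC} is unsolvable and $N\neq0$ for every admissible pair $E,F$. I would then exhibit an admissible trajectory along which $Nx(t)\not\to0$ — for instance by applying a constant or persistently exciting known input $\acute u$ that keeps $x(t)$ from decaying into $\kernel N$ — which forces $\hat x-x\not\to0$ and contradicts the UIO property. The main obstacle I anticipate is exactly this last step: making rigorous that an unavoidable residual $N\neq0$ produces a genuinely non-vanishing estimation error, since for a stable autonomous response $x(t)\to0$ would mask any $N$. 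I expect to resolve it by exploiting that a UIO must converge for \emph{all} admissible inputs and initial conditions, choosing the excitation so that $x(t)$ remains persistently outside $\kernel N$, thereby closing the equivalence.
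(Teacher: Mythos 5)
Your sufficiency argument is essentially the paper's proof: the same quotient dynamics $\dot e=\bar{\bar A}_{L}e$ for $e=z-P_{W_g^*}x$ obtained from $P_{W_g^*}\bar B=0$ and the intertwining identity, and the same observation that \eqref{eq:UIO_con} makes $\col(P_{W_g^*},C)$ monic so that \eqref{eq:EPFC} is solvable; your explicit residual $N=EP_{W_g^*}+FC-I_n$ is a clean way to package what the paper does implicitly, and it is correct. The gap is in your necessity step, and it is exactly the obstacle you flag yourself: your plan to exhibit a trajectory with $Nx(t)\not\to 0$ by \emph{persistently exciting the known input} $\acute u$ does not work in general. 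Nothing in Problem~\ref{pro:uio} prevents $\acute B=0$, or more generally $\ang{A\,|\,\image \acute B}\subseteq \kernel N$ with $A$ Hurwitz; in such a system every trajectory you can generate through $\acute u$ alone satisfies $Nx(t)\to 0$, your contradiction never materializes, and the contrapositive is left unproved. The known-input channel is also conceptually the wrong one: the directions that obstruct reconstruction live in $\V=\W_g^*\cap\kernel C$, and these are tied to the \emph{unknown} input and to the unstable invariant zeros, not to $\acute u$.

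The repair uses two facts your own setup already provides. First, $N$ never vanishes on $\V$: for $v\in\V$ one has $P_{W_g^*}v=0$ and $Cv=0$, hence $Nv=-v$, so $N|_\V=-\mathrm{id}$ for \emph{every} admissible pair $(E,F)$. Second, $\V$ is excitable through the channels the paper's necessity argument identifies: the standard recursion $\W_{k+1}=\image\bar B+A(\W_k\cap\kernel C)$ generating $\W^*$ shows $\W^*\subseteq\ang{A\,|\,\image\bar B}$, i.e., the $\W^*$-part of $\V$ is reachable by the unknown input $\bar u$ (to which the observer has no access and which it cannot compensate), while the remaining part of $\W_g^*=P_{W^*}^{-1}\bar{\X}_b^*$ corresponds to unstable invariant-zero modes, which are sustained by suitable initial conditions without any input at all. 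Choosing $\bar u$ (respectively $x_0$ along the lifted $\bar\X_b^*$ directions) so that the $\V$-component of $x(t)$ persists yields $Nx(t)\not\to 0$ and closes the argument. Note this is also where your route diverges from the paper: its necessity proof is a short informal argument along precisely these two cases (states in $\V$ are either corrupted by $\bar u$ through $\bar B$ or lie in $\bar\X_b^*$, and are invisible in $y$), whereas your $N$-based framing, once the excitation channel is corrected, would actually make that argument more explicit than the paper's.
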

\begin{proof}
    (Sufficiency) Suppose that \eqref{eq:UIO_con} holds, we have already established the existence of $L\in\mathbf{L}(\W_{g}^*)$ such that
    $\sigma\left(A_L|\X/\W_g^*\right)\subset\C_g$.
    Since $\bar{\bar A}_{L}=\mathrm{Mat}( A_{L}|\X/\W_{g}^*)$, it follows that $\bar{\bar A}_{L}$ is Hurwitz. Moreover, \eqref{eq:UIO_con} implies that
    % \begin{equation}
       $ {\W_g^*}^\bot + \image{C^\top} = \X$. 
    % \end{equation}
    Since ${\W_g^*}^\bot \simeq \image{P_{W_g^*}^\top}$, there must exist matrices $E\in\mathbb{R}^{n\times (n-w_g^*)}$ and $F\in\mathbb{R}^{n\times p}$ such that
    \begin{equation}\label{eq:EPFC}
        E P_{W_g^*}+FC=I_n.
    \end{equation}
    Defining the estimation error as $e:=x-\hat x$, and combining \eqref{eq:EPFC}, we obtain 
    % \begin{equation}
        $e=E\left(P_{W_g^*}x-z\right).$
    % \end{equation}
    To establish the convergence of $e$, it suffices to analyze the stability of $\zeta:=P_{W_g^*}x-z$, whose dynamics evolve as
    \begin{equation*}
        \dot \zeta = P_{W_g^*}\left(Ax+\Acute{B}\Acute{u} + \Bar{B}\Bar{u}\right)-\bar{\bar A}_{L}z - P_{W_g^*}\acute B\acute u + P_{W_g^*}Ly.
    \end{equation*}
    Thanks to $P_{W_g^*}\bar B=0$, it simplifies to
    % \begin{equation}
    $\dot \zeta = P_{W_g^*}A_Lx-\bar{\bar A}_{L}z$. 
    % \end{equation}
Referring to Fig~\ref{fig:Wg_decompose}, we note that $P_{W_g^*}A_L=\left(A_{L}|\X/\W_{g}^*\right)P_{W_g^*}$, leading to
    \begin{equation}
        \dot \zeta = \bar{\bar A}_{L}\left(P_{W_g^*}x-z\right)=\bar{\bar A}_{L}\zeta.
    \end{equation}
    Since $\bar{\bar A}_{L}$ is Hurwitz, $\zeta$ converges to $0$ asymptotically. Consequently, $e=E
    \zeta$ also converges to $0$ asymptotically.
    
    (Necessity) To prove the necessity of \eqref{eq:UIO_con}, we proceed by contradiction. Assume that there exists a non-trivial subspace $\V$ such that
    \begin{equation}\label{eq:contradict}
        0 \subset \V=\W_{g}^*\cap \kernel{C}
    \end{equation}
    and the system state can be reconstructed. From \eqref{eq:contradict}, it follows that $\V\subseteq\W_g^*$ and $\V\subseteq\kernel{C}$.

    Since $\V\subseteq\W_g^*$, any state vector lying in $\V$ is either influenced by unknown input $\bar u$ or resides in the subspace $\bar{\X}_b^*$, which is associated with unstable invariant zeros. Consequently, any non-zero state vector in $\V$ cannot be estimated asymptotically under the influence of the unknown input through the channel $\bar B$. On the other hand, $\V\subseteq\kernel{C}$ implies that any state vector lying in $\V$ cannot be directly inferred from the output $y$. Overall, if \eqref{eq:contradict} holds, the entire system state cannot be reconstructed, which concludes the necessity of \eqref{eq:UIO_con}.
    % contradicts the hypothesis. Hence, \eqref{eq:UIO_con} must hold.
\end{proof}
%
% \begin{remark}
%     Note that the proposed centralized UIO design is not limited to the continuous-time setting. Thanks to the feature of the geometric approach, the design is also applicable to the discrete-time model.
% \end{remark}
%
\begin{proposition}\label{prop:equivalence}
     The existing condition for centralized UIO \cite{chen1996design,darouach1994full,valcher1999state},
     specifically {\em i}) $\rank (C\bar B) = \rank (\bar B)$, {\em ii}) $(C,A_1)$ is a detectable pair with $A_1 = \left( I_n - \bar B (C\bar B)^\dagger C\right) A$,
    %  \begin{enumerate}
    %     \item[i)] $\rank (C\bar B) = \rank (\bar B)$
    %     \item[ii)] $(C,A_1)$ is a detectable pair, where $A_1 = \left( I_n - \bar B (C\bar B)^\dagger C\right) A$,
    % \end{enumerate}
is equivalent to the geometric condition proposed in \eqref{eq:UIO_con}.
\end{proposition}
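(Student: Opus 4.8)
The plan is to prove the equivalence by translating each of the classical conditions (i) and (ii) into the geometric language of $\W^*$ and the invariant-zero decomposition \eqref{eq:Xa_Xb}, and then reading off \eqref{eq:UIO_con} through Lemma~\ref{thm:Wg}. First I would dispose of condition (i): the identity $\rank\bra{C\bar B}=\rank\bra{\bar B}$ is equivalent to $C$ being injective on $\image\bar B$, i.e.\ $\image\bar B\cap\kernel C=0$. Feeding $\W^0=\image\bar B$ into the invariant-subspace recursion $\W^{k+1}=\image\bar B+A\bra{\W^{k}\cap\kernel C}$ that generates $\W^*$, condition (i) forces $\W^0\cap\kernel C=0$, so the recursion terminates immediately and $\W^*=\image\bar B$. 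Thus under (i) we have both $\W^*=\image\bar B$ and $\W^*\cap\kernel C=0$.

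Next I would connect (ii) to the decomposition \eqref{eq:Xa_Xb}. Working under (i), a short Popov--Belevitch--Hautus computation shows that a pair $(\lambda,v)$ with $v\neq0$ satisfies $A_1 v=\lambda v$ and $Cv=0$ if and only if $Cv=0$ and $\bra{A-\lambda I}v\in\image\bar B$; the latter is exactly the statement that $\lambda$ is an invariant zero of $\bra{A,\bar B,C}$ with state-zero direction $v$, the conversion between the two using $(C\bar B)^\dagger C\bar B=I$, which is valid under (i). Hence $(C,A_1)$ is detectable iff no invariant zero lies in $\C_b$, i.e.\ iff $\bar{\X}_b^*=0$, which by \eqref{eq:Wg} is in turn equivalent to $\W_g^*=\W^*$.

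With these two dictionaries in place both implications are quick. If (i) and (ii) hold, then $\W_g^*=\W^*=\image\bar B$, so $\W_g^*\cap\kernel C=\image\bar B\cap\kernel C=0$, giving \eqref{eq:UIO_con}. Conversely, assume \eqref{eq:UIO_con}. Since $\image\bar B\subseteq\W^*\subseteq\W_g^*$ we get $\image\bar B\cap\kernel C=0$, which is (i). It remains to show $\bar{\X}_b^*=0$, which I argue by contradiction. If $\lambda\in\C_b$ is an invariant zero, pick a state-zero direction $v\neq0$ with $Cv=0$ and $\bra{A-\lambda I}v=\bar B g$ for some $g$. Choosing any $L\in\mathbf{L}(\W^*)$ gives $\bra{A+LC}v=Av=\lambda v+\bar B g\equiv\lambda v\pmod{\W^*}$, so $P_{W^*}v$ is an eigenvector of $A_L|\X/\W^*$ for the unstable eigenvalue $\lambda$ sitting in $\Ss^*/\W^*$; hence $P_{W^*}v\in\bar{\X}_b^*$ and $v\in\W_g^*$. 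Moreover $v\notin\W^*=\image\bar B$, since otherwise $v\in\image\bar B\cap\kernel C=0$, so $v\neq0$ and $v\in\W_g^*\cap\kernel C$, contradicting \eqref{eq:UIO_con}. Therefore $\bar{\X}_b^*=0$, and together with (i) this yields (ii), completing the equivalence.

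The step I expect to be the main obstacle is the middle claim in the converse: justifying that the unstable state-zero direction $v$ genuinely lands in $\Ss^*/\W^*$ (so that $P_{W^*}v\in\bar{\X}_b^*$ and $v\in\W_g^*$) rather than in some other $A_L$-invariant complement. This rests on the structural fact, underlying \eqref{eq:Xa_Xb}, that the invariant zeros of $\bra{A,\bar B,C}$ are exactly $\sigma\bra{A_L|\Ss^*/\W^*}$ and that their directions admit representatives inside $\Ss^*$; I would supply this from the construction of \eqref{eq:Xa_Xb} together with the standard geometric characterization of invariant zeros, while the inclusion $\image\bar B\subseteq\W^*$ keeps the chosen representative away from $\W^*$ so that the produced witness is nonzero.
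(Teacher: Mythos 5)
Your proposal is correct and lands on the same reformulation as the paper --- condition \emph{i)} becomes $\image \bar B\cap\kernel C=0$ together with $\W^*=\image\bar B$ (your one-step termination of the recursion $\W^{k+1}=\image\bar B+A(\W^k\cap\kernel C)$ is interchangeable with the paper's appeal to the invariance criterion $A(\W\cap\kernel C)\subseteq\W$ from \cite{massoumnia1986geometric}), and condition \emph{ii)} becomes $\bar\X_b^*=0$ --- but the engine you use for the key equivalences is genuinely different. The paper handles \emph{ii)} by recognizing $\bar B(C\bar B)^\dagger C$ as the projection on $\image\bar B$ along the complement, so that detectability of $(C,A_1)$ transfers to the quotient pair $(\bar C,\bar A_L)$ on $\X/\image\bar B$ (its Fig.~2), and it proves the implication \eqref{eq:UIO_con}~$\Rightarrow\bar\X_b^*=0$ entirely by coordinate-free subspace algebra: it sandwiches $\bar\X_b^*$ between $\bar\X_b^*\subseteq P_{W^*}\kernel C$ (from $\Ss^*+\kernel C=\W^*+\kernel C$) and $\bar\X_b^*\cap P_{W^*}\kernel C=0$ (from \eqref{eq:UIO_con} plus the modular/projection identities of Wonham and Massoumnia), never touching individual eigenvectors. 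You instead run a PBH dictionary --- unobservable modes of $(C,A_1)$ correspond exactly to state-zero directions $Cv=0$, $(A-\lambda I)v\in\image\bar B$ --- and in the converse you manufacture from a putative bad zero an explicit nonzero witness $v\in\W_g^*\cap\kernel C$. Your route is more elementary and makes the obstruction concrete (a specific unstable zero direction violating \eqref{eq:UIO_con}), at the price of needing the structural fact you flag; the paper's route avoids representatives and eigenvalue multiplicities altogether but leans on citable subspace identities that are less self-explanatory. The step you worry about does close: a zero direction $v$ lies in $\Ss^*$ by induction on the unobservability-subspace algorithm $\Ss^{k+1}=\W^*+(A^{-1}\Ss^k\cap\kernel C)$, $\Ss^0=\X$ (from $Cv=0$ and $Av=\lambda v+\bar Bg$ with $\image\bar B\subseteq\W^*\subseteq\Ss^k$ one gets $v\in A^{-1}\Ss^k\cap\kernel C$), and then $P_{W^*}v$ is an eigenvector of $A_L|\Ss^*/\W^*$ annihilated by $\beta_b$, hence in $\bar\X_b^*$ by \eqref{eq:Xa_Xb}; the reverse inclusion needed in your forward direction (every bad eigenvalue of $A_L|\Ss^*/\W^*$ yields a zero direction) follows from the fixed-point identity $\Ss^*=\W^*+(A^{-1}\Ss^*\cap\kernel C)$, which lets you choose a representative in $\kernel C$ once $\W^*=\image\bar B$. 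Two cosmetic caveats: $(C\bar B)^\dagger C\bar B=I$ requires $\bar B$ monic --- without that, use instead that the projection $\bar B(C\bar B)^\dagger C$ acts as the identity on $\image\bar B$, which is all your computation needs; and in the converse the case $v\in\W^*$ needs no eigenvector argument at all, since $\W^*\cap\kernel C\subseteq\W_g^*\cap\kernel C=0$ kills it immediately.
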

\begin{proof}
    To establish the proof, we first reformulate the conditions i) and ii) through the lens of geometric approach. Condition i) indicates that $\image \bar{B} \cap \kernel C = 0$. 
    In condition ii), we observe that $\bar B (C\bar B)^\dagger C$ is a projection map on $\image \bar B$. Consequently, $\left( I_n - \bar B (C\bar B)^\dagger C\right)$ is the projection on $\M$ along $\W^*$ \cite[Chap. 0.4]{wonham1985linear}, where $\M \simeq \X/\W^*$ and $\W^*$ is the infimal $(C,A)$-invariant subspace containing $\image{\bar B}$. Accordingly, the condition that $(C,A_1)$ is detectable is equivalent to $(\bar C, \bar A_L)$ ($\bar C$, $\bar A_L$ are defined as in Fig.~\ref{fig:forremark}) being detectable, which means the sub-quotient space associated with unstable invariant zeros is trivial, i.e., $\bar\X^*_b=0$. 
    Thus, conditions i) and ii) are equivalent to $\image \bar{B} \cap \kernel C = 0$, $\bar\X^*_b=0$. With the reformulation, we now proceed the proof.

    $\big($i) \& ii) $\Rightarrow$ \eqref{eq:UIO_con}$\big)$
    According to \cite[Lemma~4]{massoumnia1986geometric}, a subspace $\W$ is $(C,A)$-invariant if and only if $A(\W \cap \kernel C) \subseteq \W$. Given that $\image \bar{B} \cap \kernel C = 0$, it follows that $A(\image \bar B \cap \kernel C) =0 \subseteq \image \bar B$, which confirms that $\image \bar B$ is $(C,A)$-invariant. Moreover, as $\W(\image \bar B) \supseteq \image \bar B$, $\image \bar B$ is evidently the infimal subspace of all $\W(\image \bar B)$. Consequently, we conclude that $\W^*(\image \bar B)=\image \bar B$. From \eqref{eq:Wg}, it follows that $\W_g^* = P_{W_g^*}^{-1}\bar\X^*_b = P_{W_g^*}^{-1} 0 = \W^*=\image \bar B$, which implies that $\W_{g}^*\cap \kernel{C}=0$ holds.
\begin{figure}[t]
    \centering
    \scalebox{0.9}{\begin{tikzpicture}

% this makes an (N+1)x(M+1) grid (N=2, M=3) of named coordinates as follows:
% (N0) .. (Nj) .. (NM)
%  |
% (i0) .. (ij) .. (iM)
%  |
% (00) .. (0j) .. (0M)

\def\xgap{1.5}
\def\ygap{1.5}
\foreach \i in {0,...,5}{
    \foreach \j in {0,...,5}{
        \coordinate (\j\i) at (\i*\xgap, \j*\ygap);
    }
}
% this bit is useful to rotate the labels
\pgfmathatantwo{\ygap}{\xgap}
\def\labangle{\pgfmathresult}

\tikzset{label/.style = {font=\small, midway}}

% the grid can be directly addressed to add labels at each point

\node (U) at (10) {$\bar{\mathscr U}$};
\node (XmodS) at (01) {$\mathscr X / \image \bar B$};
\node (X) at (11) {$\mathscr X$};
% \node (S) at (21) {$\mathscr S^*$};
% \node (S2) at (22) {$\mathscr S^*$};
% \node (Z) at (23) {$C\mathscr S^*$};
\node (X2) at (13) {$\mathscr X$};
\node (Y) at (15) {$\mathscr Y$};
\node (XmodS2) at (03) {$\mathscr X / \image \bar B$};
\node (YmodZ) at (05) {$\Y/(C\image \bar B)$};

% \draw[-latex] (U) -- (S) node [label, above, rotate=\labangle] {$B^\flat$};
% \draw[-latex] (S) -- (X) node [label, right] {$S$};
\draw[-latex] (U) -- (X) node [label, above] {$\bar B$};
\draw[-latex] (X) -- (XmodS) node [label, right] {$P$};
\draw[-latex, dashed] (U) -- (XmodS) node [label, below, rotate=-\labangle] {$0$};

% \draw[-latex] (S2) -- (Z) node [label, above] {$C^\flat$};%{$\mathscr Z |(C|\mathscr S^*)$};

% \draw[-latex] (S) -- (S2) node [label, above] {$A_L^\flat$};
% \draw[-latex] (S2) -- (X2) node [label, right] {$S$};
% \draw[-latex] (S2) -- (Y) node [label, above, rotate=-\labangle] {$C^\flat_Y$};
% \draw[-latex] (Z) -- (Y) node [label, right] {$S_Y$};

\draw[-latex] (X) -- (X2) node [label, above] {$A_L$};
\draw[-latex] (X2) -- (Y) node [label, above] {$C$};

\draw[-latex] (X2) -- (XmodS2) node [label, right] {$P$};
\draw[-latex] (X2) -- (YmodZ) node [label, above, rotate=-\labangle] {$\bar C_Y$};
\draw[-latex] (Y) -- (YmodZ) node [label, right] {$P_Y$};

\draw[-latex] (XmodS) -- (XmodS2) node [label, above] {$\bar A_L$};
\draw[-latex] (XmodS2) -- (YmodZ) node [label, above] {$\bar{C}$};
\end{tikzpicture}}\\[-1.2ex]
    \caption{Commutative diagram for systems satisfying conventional UIO conditions. In the diagram, $P:\X\to \X/\image \bar B$ is the canonical projection modulo $\image \bar B$ (note that $\image \bar B$ is now $(C,A)$-invariant), and $P_{Y}:\Y\to\bar \Y$ is the canonical projection modulo $C\image \bar B$. 
    The map $\bar{C}:\X / \image \bar B \to \Y / (C\image \bar B)$ is well-defined due to the fact that $\kernel {\bar{C}_Y}=\kernel{P_{Y}C}=\image \bar B+\kernel C \supseteq \image \bar B$. The map $\bar{A}_{L}:\X / \image B \to \X / \image B$ is the map induced on $\X / \image B$ by $A_{L}$.}
    \label{fig:forremark}
\end{figure}

$\big($\eqref{eq:UIO_con} $\Rightarrow$ i) \& ii)$\big)$ Since $\image \bar B \subseteq\W_g^*$, it is straightforward to conclude that $\image \bar B \cap \kernel C =0$ follows directly from $\W_{g}^*\cap \kernel{C}=0$, which further implies $\W^* = \image{\bar{B}}$ by following the same derivation as in the previous paragraph. According to \cite[Eq. (2.62)]{massoumnia1986geometric} and the fact $\Ss^*=P_{W^*}^{-1}(\Ss^*/\W^*)$, we derive $P_{W^*}^{-1}(\Ss^*/\W^*)+\kernel{C} = \W^* + \kernel{C}$. Furthermore, using \cite[Eq. (2.9)]{massoumnia1986geometric}, we obtain $\Ss^*/\W^* + P_{W^*}\kernel{C} = P_{W^*}\kernel{C}$, which implies that $\Ss^*/\W^*\subseteq P_{W^*}\kernel{C}$. Combining this with \eqref{eq:Xa_Xb}, we write
\begin{equation}
    \bar{\X}^*_b\subseteq P_{W^*}\kernel{C}.\label{eq:(14)}
\end{equation}
Follows from $\W^*_g\cap \kernel{C} = 0$ and the definition in \eqref{eq:Wg}, we conclude that $P_{W^*}^{-1}\bar{\X}^*_b\cap \kernel{C} = 0$. 
Observe that $\kernel P_{W^*}=\W^*\subseteq\W^*_g$, which implies that $\W^*\subseteq(\W^*_g + \kernel C)$. Consequently, we have $\W^*\cap(\W^*_g + \kernel C) = \W^*$. Moreover, since $\W^*\cap\kernel{C}=0$, it follows that $\W^*\cap\W^*_g +\W^*\cap \kernel C = \W^*$.
Then, by referring to \cite[Chapter~0.4, Eq.(4.2), Eq.(4.3)]{wonham1985linear}, we derive
\begin{align}
    P_{W^*}(P_{W^*}^{-1}\bar{\X}^*_b\cap &\kernel{C})\notag \\ 
    &= (P_{W^*} P_{W^*}^{-1}\bar{\X}^*_b)\cap P_{W^*}\kernel{C}\!=\!0.\label{eq:(15)}
\end{align}
From \cite[Eq. (2.27 )]{massoumnia1986geometric}, it follows that $P_{W^*} P_{W^*}^{-1}\bar{\X}^*_b = \bar{\X}^*_b\cap(\X/\W^*)=\bar{\X}^*_b$. Thus, using \eqref{eq:(15)}, we establish
\begin{equation}
    \bar{\X}^*_b\cap P_{W^*}\kernel{C} = 0.\label{eq:(16)}
\end{equation}
Finally, combing \eqref{eq:(14)} and \eqref{eq:(16)}, we conclude that $\bar{\X}^*_b=0$.

Overall, since both directions hold, i.e., i) \& ii) $\Rightarrow$ \eqref{eq:UIO_con} and \eqref{eq:UIO_con} $\Rightarrow$ i) \& ii), the existing UIO condition and \eqref{eq:UIO_con} are shown to be equivalent. This completes the proof.
\end{proof}

\subsection{Distributed UIO Design}
\label{sec:distributed_uio}
 Building upon the methodologies developed in \cite{chen1996design,darouach1994full,valcher1999state}, several notable studies \cite{yang2022state,cao2023distributed,cao2023distributedauto} have explored the distributed UIO problem. However, these approaches impose a stringent requirement: each local node must individually satisfy the rank condition regarding the unknown input and output matrices. This condition may lead to infeasible designs, even if only a single node fails to meet it.

To circumvent the limitation mentioned above, we extend our centralized UIO proposed in Section~\ref{sec:centralized_UIO} to distributed settings. First, based on the geometric approach in Section~\ref{sec:space_decomposition}, for each node $i$, we make some definitions: $\W_{i}^*\!=\!\W_{i}^*(C_i,A;\bar B_i), \Ss_{i}^*\!=\!\Ss_{i}^*(C_i,A;\bar B_i),\W_{g,i}^*\!=\!P_{W_i^*}^{-1}\bar{\X}_{b,i}^*$,
where $P_{W_i^*}:\X\rightarrow\X/\W_i^*$ is the canonical projection, $\bar{\X}_{b,i}^*$ is the sub-quotient space associated with unstable invariant zeros defined by $\bar{\X}_{b,i}^* \coloneqq \frac{\Ss_i^{*}}{\W_i^{*}}\  \bigcap \ \kernel \beta_{b}(A_{L_i}|\Ss_i^*/\W_i^*)$. Let $V_i$ denote the orthonormal basis of $\bar\X^*_{b,i}$ at node $i$, i.e., $\image{V_i}\simeq \bar{\mathscr{X}}_{b,i}^*$. Moreover, we define the insertion maps $W_{i}^*:\W_{i}^*\rightarrow\X$ and $W_{g,i}^*:\W_{g,i}^*\rightarrow\X$, and the canonical projection $\Pwgi:\X\rightarrow\X/\W_{g,i}^*$.

For the node set $\mathbf{N}$, we distinguish it into two subsets $\mathbf{N}_1$ and $\mathbf{N}_2$, i.e., $\mathbf{N}=\mathbf{N}_1\cup\mathbf{N}_2$, where $\mathbf{N}_1$ denotes the set of sensors that satisfy the local rank condition regarding local unknown input and local output matrices \cite{yang2022state,cao2023distributed,cao2023distributedauto}
\begin{equation}
    \rank (C_i\bar B_i)=\rank (\bar B_i),\label{eq:duio_rank_con}
\end{equation}
$\mathbf{N}_2$ denotes the set that does not satisfy the rank condition, i.e., $\rank (C_j\bar B_j)\neq \rank (\bar B_j)$, $j\in\mathbf{N}_2$. Therefore, we need to design two types of local observers.

For the node set $\mathbf{N}_1$, since \eqref{eq:duio_rank_con} holds, we can state that $\W_i^*=\image{\bar B_i}$ and $\W_i^*\cap\kernel{C_i}=0$
based on the proof of Proposition~\ref{prop:equivalence}. Moreover, since $\kernel{P_{W_i^*}}=\W_i^*$, there must exist matrix $E_i\in\mathbb{R}^{n\times(n-w_i^*))} $ and $F_i\in\mathbb{R}^{n\times p_i}$ such that
\begin{equation}\label{eq:F_i}
    E_i P_{W_i^*} + F_i C_i = I_n,\ i\in\mathbf{N}_1.
\end{equation}
The dynamics of the distributed UIO for $\mathbf{N}_1$ are designed as
\begin{align}
        \dot z_i &=\bar{A}_{L_i}z_i - P_{W_i^*}L_iy_i + P_{W_i^*}B_i u_i \notag\\
        &+ \chi P_{W_i^*}V_iV_i^\top \sum_{j=1}^N a_{ij} (\hat x_j - \hat x_i)\notag\\
        \hat x_i &= E_i z_i + F_iy_i,\ i\in\mathbf{N}_1
\label{eq:DUIO_1}
\end{align}
where $z_i\in\mathbb{R}^{n-w_{i}^*}$ is the auxiliary variable, $\chi\in\R_{>0}$ is the coupling gain of the consensus term, 
$\bar{A}_{L_i}:=\mathrm{Mat}(A_{L_i}|\X/\W_i^*)$, $E_i$ and $F_i$ are defined in \eqref{eq:F_i}.

The dynamics of the Distributed UIO for $\mathbf{N}_2$ are designed as follows
\begin{align}
    &\dot{\hat{x}}_i \!=\! A_{L_i}\hat{x}_i-L_i y_i + B_i u_i 
    + \chi W_{g,i}^* {W_{g,i}^*}^\top \sum_{j=1}^N a_{ij} (\hat x_j\!-\!\hat x_i)\notag\\
    &\ + \gamma W_{g,i}^*\mathrm{sign}\left( {W_{g,i}^*}^\top \sum_{j=1}^N a_{ij} (\hat x_j - \hat x_i) \right),\ i\in\mathbf{N}_2,    
\label{eq:DUIO_2}
\end{align}
where $\gamma\in\R_{>0}$ is the coupling gain of the nonlinear consensus terms. 
\begin{assumption}\label{asm:connected}
    The undirected graph $\mathcal{G} = (\mathbf{N},\mathcal{E},\mathcal{A})$ that describes the communication connection among the local distributed UIOs is connected.
\end{assumption}
\begin{assumption}\label{asm:bounded}
    The unknown inputs of the node set $\mathbf{N}_2$ are bounded, i.e., for all $i\in\mathbf{N}_2$,
    % \begin{equation}
        $\|\bar{u}_i\|_\infty \le \bar{u}_{\max}$ holds.
    % \end{equation}
\end{assumption}
\begin{assumption}\label{asm:geo_con}
    The sub-quotient space $\bar\X_{b,i}^*$ ($i\in\mathbf{N}_1$) and the subspace $\W_{g,j}^*$ ($j\in\mathbf{N}_2$) have the joint property: $\left(\cap_{i\in\mathbf{N}_1}\image{V_i}\right)\bigcap\left(\cap_{j\in\mathbf{N}_2}\W_{g,j}^*\right)=0$, where $\image{V_i}\simeq \bar{\mathscr{X}}_{b,i}^*$.
\end{assumption}
Building upon the above assumptions, we present the following result.
\begin{theorem}
    Given an LTI system \eqref{eq:sys} and {\em Problem~\ref{pro:duio}}, under {\em Assumptions~\ref{asm:connected}-\ref{asm:geo_con}, the networked observer $\{\mathcal{O}_i\}_{i\in\mathbf{N}}$ obtained by combining \eqref{eq:DUIO_1} and~\eqref{eq:DUIO_2}} is a distributed UIO if
    \begin{equation}
    \begin{split}
        \chi&> \frac{\left\|\bm{A}_L\right\|_2}{\sigma_{\min}\left({\bm W_V}^\top(\Lap\otimes I_n){\bm W_V}\right)},\\ 
        \gamma&>\bar{u}_{\max}\max_{i\in\mathbf{N}_2}\left(\left\| \bar{B}_i \right\|_1\right)\max_{i\in\mathbf{N}_2}\left(\left\|W_{g,i}^*\right\|_{\infty}\right)
    \end{split}\label{con:gain}
    \end{equation}
    holds, where
    \begin{equation}
        \begin{aligned}
            \bm W_V&=\diag\left(\diag_{i\in\mathbf{N}_1}(V_i),\diag_{j\in\mathbf{N}_2}(W_{g,j}^*)\right)\\
    \bm A_L&=\diag\left(\diag_{i\in\mathbf{N}_1}(\Tilde{\Tilde{A}}_{L_i}),\diag_{j\in\mathbf{N}_2}(\Tilde{A}_{L_j})\right)
        \end{aligned}
    \end{equation}
    with $\Tilde{\Tilde{A}}_{L_i}:=\mathrm{Mat}(A_{L_i}|\bar\X_{b,i})$ and $\Tilde{A}_{L_i}:=\mathrm{Mat}(A_{L_i}|\W_{g,i}^*)$.
\end{theorem}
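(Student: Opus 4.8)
The plan is to split each local estimation error $e_i:=x-\hat x_i$ into a component that node $i$ can reconstruct on its own and that decays autonomously, and a residual confined to the node's unreconstructable subspace ($\image V_i$ for $i\in\mathbf{N}_1$ and $\W_{g,i}^*$ for $i\in\mathbf{N}_2$), and then to show that the consensus coupling drives the residuals to zero. Writing $\bm e:=\col(e_1,\dots,e_N)$ and using $\hat x_j-\hat x_i=e_i-e_j$, every diffusive term $\sum_j a_{ij}(\hat x_j-\hat x_i)$ equals the $i$-th block $\bm s_i$ of $\bm s:=(\Lap\otimes I_n)\bm e$; this common object is what couples the two observer families.

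First I would derive the two error dynamics. For $i\in\mathbf{N}_1$, using $E_iP_{W_i^*}+F_iC_i=I_n$ together with $P_{W_i^*}\bar B_i=0$ (which holds because the rank condition \eqref{eq:duio_rank_con} gives $\W_i^*=\image\bar B_i$), I would obtain $e_i=E_i\zeta_i$ with $\zeta_i:=P_{W_i^*}x-z_i$ satisfying $\dot\zeta_i=\bar A_{L_i}\zeta_i-\chi P_{W_i^*}V_iV_i^\top\bm s_i$; here $\bar A_{L_i}$ is Hurwitz only on $(\X/\W_i^*)/\bar\X_{b,i}^*$, so its unstable modes live in $\bar\X_{b,i}^*\simeq\image V_i$, exactly the directions the consensus term targets. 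For $i\in\mathbf{N}_2$, a direct computation gives $\dot e_i=A_{L_i}e_i+\bar B_i\bar u_i-\chi W_{g,i}^*{W_{g,i}^*}^\top\bm s_i-\gamma W_{g,i}^*\sign({W_{g,i}^*}^\top\bm s_i)$. By Lemma~\ref{thm:Wg} at node $i$, $L_i$ can be chosen so that $A_{L_i}\W_{g,i}^*\subseteq\W_{g,i}^*$ and $\kappa(A_{L_i}|\X/\W_{g,i}^*)\subset\C_g$; since $\image\bar B_i\subseteq\W_{g,i}^*$ and both the consensus and sign terms lie in $\image W_{g,i}^*=\W_{g,i}^*$, the reconstructable component (the projection of $e_i$ onto $\X/\W_{g,i}^*$) is governed by the Hurwitz map $A_{L_i}|\X/\W_{g,i}^*$ alone and decays to zero, while the same reasoning confines the $\mathbf{N}_1$ residual to $\image V_i$. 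Hence, up to an exponentially vanishing transient, $\bm e=\bm W_V\bm\xi$ for a reduced coordinate $\bm\xi$, and the residual drift is carried by $\bm A_L$.

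The heart of the argument is a consensus Lyapunov analysis on the residual. I would first establish the key algebraic fact that, under Assumptions~\ref{asm:connected} and~\ref{asm:geo_con}, $\bm Q:=\bm W_V^\top(\Lap\otimes I_n)\bm W_V\succ0$: connectivity gives $\kernel(\Lap\otimes I_n)=\mathrm{span}(\mathbf 1_N)\otimes\R^n$, so $\bm\xi^\top\bm Q\bm\xi=0$ forces $\bm W_V\bm\xi=\mathbf 1_N\otimes v$ with $v\in(\cap_{i\in\mathbf{N}_1}\image V_i)\cap(\cap_{j\in\mathbf{N}_2}\W_{g,j}^*)=0$ by Assumption~\ref{asm:geo_con}, and injectivity of the blocks of $\bm W_V$ then yields $\bm\xi=0$; this also shows the bound in \eqref{con:gain} is finite. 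Taking $V:=\tfrac12\bm e^\top(\Lap\otimes I_n)\bm e=\tfrac12\bm\xi^\top\bm Q\bm\xi$, its derivative contains the consensus dissipation $-\chi\norm{\bm Q\bm\xi}_2^2$, a drift term bounded by $\norm{\bm A_L}_2\norm{\bm Q\bm\xi}_2\norm{\bm\xi}_2$, and a combined disturbance and sign-term contribution that is nonpositive once $\gamma$ exceeds the product in \eqref{con:gain} (the standard sliding-mode domination: writing $\bar B_i\bar u_i=W_{g,i}^*\mu_i$ and $\sigma_i:={W_{g,i}^*}^\top\bm s_i$, one has $\sigma_i^\top\mu_i\le\norm{\sigma_i}_1\norm{\mu_i}_\infty$, with $\norm{\mu_i}_\infty$ bounded through Assumption~\ref{asm:bounded}). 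Using $\norm{\bm\xi}_2\le\norm{\bm Q\bm\xi}_2/\sigma_{\min}(\bm Q)$ gives $\dot V\le\bra{\norm{\bm A_L}_2/\sigma_{\min}(\bm Q)-\chi}\norm{\bm Q\bm\xi}_2^2$, negative definite in $\bm\xi$ under the first inequality of \eqref{con:gain}. Therefore $\bm\xi\to0$, so every residual, and hence every $e_i$, converges to zero.

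I expect the main obstacle to be the reduction in the second step: verifying that the two heterogeneous observer families, the quotient-based design \eqref{eq:DUIO_1} with its coupling $P_{W_i^*}V_iV_i^\top$ and mixing $E_iP_{W_i^*}=I_n-F_iC_i$, and the full-order design \eqref{eq:DUIO_2}, collapse on their respective residual subspaces to a single aggregate whose consensus coupling is exactly $\bm Q$ and whose drift is exactly $\bm A_L$. Tied to this are the need to treat the exponentially decaying reconstructable modes as a vanishing perturbation (a cascade/ISS argument) and to handle the discontinuous $\sign$ term via Filippov solutions; both are routine once the aggregate form is in place, but establishing that form is the delicate subspace bookkeeping.
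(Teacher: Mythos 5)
Your proposal follows essentially the same route as the paper's own proof: the same splitting of each error into an autonomously decaying quotient component and a residual confined to $\image V_i$ (for $\mathbf{N}_1$) or $\W_{g,i}^*$ (for $\mathbf{N}_2$), the same stacked residual dynamics driven by $\bm A_L$, the same key fact that $\bm Q = \bm W_V^\top(\Lap\otimes I_n)\bm W_V \succ 0$ under Assumptions~\ref{asm:connected} and~\ref{asm:geo_con} (which you prove directly via the kernel argument while the paper cites \cite[Lemma~4]{kim2019completely}), and the same Lyapunov domination producing exactly the gain conditions \eqref{con:gain}. The only divergences are cosmetic --- you take $V=\tfrac12\bm e^\top(\Lap\otimes I_n)\bm e$ where the paper uses $\varepsilon^\top\bm Q\varepsilon$ with $\varepsilon=\bm W_V^\top\bar e$, and the ``delicate subspace bookkeeping'' you flag (absorbing the mixing maps $E_iP_{W_i^*}$ into the residual identification) is precisely what the paper resolves through the coordinates $\varrho_i=P_{W_i^*}^\top(P_{W_i^*}x-z_i)$, the matrices $M_i$ with $\|M_i\|\le 1$, and the bound $\|\bar{\bm Q}\bm Q^{-1}\|\le 1$.
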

\begin{proof}
    Let $e_i:=x-\hat x_i$ be the estimation error at node $i$.
%     , then, for $i\in\mathbf{N}_1$, we have
% \begin{equation}
%     \dot e_i 
%     % = P_{W_i^*}^\top\left( \bar A_{L_i}\left(P_{W_i^*}x-z_i\right)-\chi_iP_{W_i^*}V_iV_i^\top\sum_{j=1}^N a_{ij} (e_i - e_j)\right)
%     = P_{W_i^*}^\top\left( \bar A_{L_i}P_{W_i^*}e_i-\chi_iP_{W_i^*}V_iV_i^\top\sum_{j=1}^N a_{ij} (e_i - e_j)\right).
% \end{equation}
According to the invariant property, we have the relationships
\begin{equation}
    P_{W_i^*}=\begin{bmatrix}
        P_{W_{g,i}^*}\\V_i^\top
    \end{bmatrix},\ W_{g,i}^*=\begin{bmatrix}
        W_i^*&V_i
    \end{bmatrix}.
\end{equation}
For $i\in\mathbf{N}_1$, we have $e_i=E_i(P_{W_i^*}x-z_i)$. According to \eqref{eq:F_i}, we can always choose a $E_i$ such that $\image{E_i}\simeq {\X/\W_i^*}$, and choose a matrix $M_i\in\mathbb{R}^{n\times n}$ such that $e_i=M_i\varrho_i$ with $\varrho_i:=P_{W_i^*}^\top(P_{W_i^*}x-z_i)$ and $\|M_i\|\leq 1$, which implies that $e_i$ is stable if $\varrho_i$ is stable. 
We first analyze the stability of $P_{W_{g,i}^*}\varrho_i$:
% \begin{equation}
\begin{align}
    P_{W_{g,i}^*}\dot \varrho_i &= P_{W_{g,i}^*}\left(P_{W_i^*}^\top P_{W_i^*}A_{L_i}x-P_{W_i^*}^\top\bar{A}_{L_i}z_i\right)\notag\\
    % &=P_{W_{g,i}^*}\left(\begin{bmatrix}
    %     P_{W_{g,i}^*}^\top &V_i
    % \end{bmatrix} \begin{bmatrix}
    %     P_{W_{g,i}^*}\\V_i^\top
    % \end{bmatrix}A_{L_i}x-P_{W_i^*}^\top\bar{A}_{L_i}z_i\right)\\
    &=\Pwgi A_{L_i}x - \Pwgi \Pwi^\top \bar{A}_{L_i}z_i.\label{eq:Pwgi_dot_e_i}
\end{align}
% \end{equation}
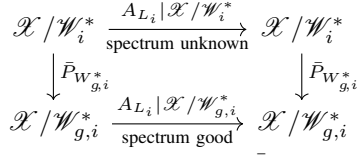
\begin{figure}
    \centering
    \begin{tikzcd}[column sep=50pt]
% \bar\X_{b,i}^* \arrow{d}{\bar W_i} \arrow{r}{A_{L_i}|\bar\X_{b,i}^*}[swap]{\text{spectrum bad\&fixed}} & \bar\X_{b,i}^* \arrow{d}{\bar W_i} \\ 
\X/\W_i^* \arrow{d}{\bar P_{W_{g\!,i}^*}} \arrow{r}{ A_{L_i}|\X/\W_i^*}[swap]{\text{spectrum unknown}} & \X/\W_i^* \arrow{d}{\bar P_{W_{g\!,i}^*}}  \\
\X/\W_{g,i}^* \arrow{r}{ A_{L_i}|\X/\W_{g,i}^*}[swap]{\text{spectrum good}} & \X/\W_{g,i}^* 
\end{tikzcd}\\[-1.2ex]
    \caption{Commutative diagram illustrating $\bar{\bar A}_{L_i}\bar P_{W_{g\!,i}^*} = \bar P_{W_{g\!,i}^*}\bar A_{L_i}$.}
    \label{fig:X_W_decompose}
\end{figure}
Let 
$
    \bar{\bar A}_{L_i}=\mathrm{Mat}( A_{L_i}|\X/\W_{g,i}^*).
$
According to Fig.~\ref{fig:X_W_decompose}, we have 
$\bar{\bar A}_{L_i}\bar P_{W_{g\!,i}^*} = \bar P_{W_{g\!,i}^*}\bar A_{L_i}$, where the canonical projection $\bar P_{W_{g\!,i}^*}:\X/\W_i^*\rightarrow\X/\W_{g,i}^*$ can be obtained by
$
    \bar P_{W_{g\!,i}^*}=\Pwgi \Pwi^\top.
$
Then, following \eqref{eq:Pwgi_dot_e_i}, we have
% \begin{equation}
\begin{align}
    P_{W_{g,i}^*}\dot \varrho_i &=\Pwgi A_{L_i}x - \Pwgi \Pwi^\top \bar{A}_{L_i}z_i\notag\\
%    &=\bar{\bar A}_{L_i}\Pwgi x - \bar{\bar A}_{L_i}\Pwgi \Pwi^\top z_i\\
    &=\bar{\bar A}_{L_i}\left(\Pwgi x - \Pwgi \Pwi^\top z_i\right).\label{eq:dot_zeta}
\end{align}
% \end{equation}
Given $\Pwgi \varrho_i = \Pwgi x - \Pwgi \Pwi^\top z_i$ and \eqref{eq:dot_zeta}, yields
\begin{equation}\label{eq:Pwgie}
    P_{W_{g,i}^*}\dot \varrho_i=\bar{\bar A}_{L_i}P_{W_{g,i}^*} \varrho_i.
\end{equation}
Thanks to $\bar{\bar A}_{L_i}=\mathrm{Mat}( A_{L_i}|\X/\W_{g,i}^*)$ is Hurwitz, $P_{W_{g,i}^*} \varrho_i$ converges to zeros asymptotically. Moreover, since ${W_i^*}^\top \varrho_i=0$ for all $i\in\mathbf{N}_1$, it suffices to prove the stability of $V_i^\top \varrho_i$. Let $\Tilde{\Tilde{A}}_{L_i}:=\mathrm{Mat}(A_{L_i}|\bar\X_{b,i}^*)$. Then, we have
\begin{equation}
\begin{aligned}
    V_i^\top \dot \varrho_i =& \Tilde{\Tilde{A}}_{L_i}V_i^\top\varrho_i-\chi V_i^\top \sum_{j=1}^N a_{ij} (e_i - e_j).
\end{aligned}
\end{equation}
For node $i\in\mathbf{N}_2$, the error $P_{W_{g,i}^*} e_i$ asymptotically converges to zero as well, since its dynamics are identical to \eqref{eq:Pwgie}. We omit this part for space limitations. Hence, for node $i\in\mathbf{N}_2$, it suffices to prove the stability of ${\Wgi}^\top e_i$ as follows
\begin{multline}
    % \begin{split}
       {\Wgi}^\top \dot{e}_i
       % =&{\Wgi}^\top A_Le_i+ {\Wgi}^\top \bar{B} \bar{u}_i\\
       % &-\chi {W_{g,i}^*}^\top \sum_{j=1}^N a_{ij} (e_i - e_j)\\
       %  &-\gamma \mathrm{sign}\left( {W_{g,i}^*}^\top \sum_{j=1}^N a_{ij} (e_i - e_j) \right)\\
        ={\Wgi}^\top A_L\left(W_{g,i}^*{W_{g,i}^*}^\top e_i+\Pwgi^\top\Pwgi e_i\right) \\
        + {\Wgi}^\top \bar{B} \bar{u}_i
       -\chi {W_{g,i}^*}^\top \sum_{j=1}^N a_{ij} (e_i - e_j)\\
        -\gamma \mathrm{sign}\left( {W_{g,i}^*}^\top \sum_{j=1}^N a_{ij} (e_i - e_j) \right)
% \end{split}
\end{multline}
Let $\Tilde{A}_{L_i}:=\mathrm{Mat}(A_{L_i}|\W_{g,i}^*)$, which can be calculated by $\Tilde{A}_{L_i}={\Wgi}^\top A_{L_i}\Wgi$. As $\Pwgi e_i$ is stable, we only need to investigate the stability of
\begin{align*}
        &{\Wgi}^\top \dot{e}_i = \Tilde{A}_{L_i} {\Wgi}^\top e_i + {\Wgi}^\top \bar{B} \bar{u}_i\\
        &\!-\!\chi {W_{g,i}^*}^\top \sum_{j=1}^N a_{ij} (e_i\! -\! e_j)
        \! -\!\gamma \mathrm{sign}\left(\! {W_{g,i}^*}^\top \sum_{j=1}^N a_{ij} (e_i \! -\! e_j)\! \right)\!.
\end{align*}
Define $
% \begin{equation}
% \begin{aligned}
    \bm W=\diag\left(\mathbf{0},\diag_{j\in\mathbf{N}_2}(W_{g,j}^*)\right),\ 
    \bar{\bm B}=\diag\left(\mathbf{0},\diag_{j\in\mathbf{N}_2}(\bar{B}_j)\right),\ 
    \bar{\bm u}=\diag\left(\mathbf{0},\diag_{j\in\mathbf{N}_2}(\bar{u}_j)\right)
% \end{aligned}
% \end{equation}
$.
Let $e:=\col\left(\col_{i\in\mathbf{N}_1}(e_i),\col_{j\in\mathbf{N}_2}(e_j)\right)$ and $\bar e:=\col\left(\col_{i\in\mathbf{N}_1}(\varrho_i),\col_{j\in\mathbf{N}_2}(e_j)\right)$, then one can obtain
% \begin{equation}
\begin{align}
    \bm W_V^\top\dot{\bar{e}} =& \bm A_L \bm W_V^\top \bar{e} - \chi\bm W_V^\top(\mathcal{L}\otimes I_n)\bar{M}\bar e \notag\\
    &+ \bm W^\top\Bar{\bm B}\bar{\bm u}- \gamma\sign{\left(\bm W^\top(\mathcal{L}\otimes I_n)e\right)}.\label{eq:Wve}
\end{align}
% \end{equation}
where $\bar M = \diag(\diag_{i\in\mathbf{N}_1}(M_i),I_{nN_2})$.
Consider the following equations 
\begin{align*}
    I_{nN} =& \bm W_V \bm W_V^\top +\diag_{i\in\mathbf{N}}(\Pwgi^\top\Pwgi) \\
    & + \diag\left(\diag_{i\in\mathbf{N}_1}(W_i^*{W_i^*}^\top),\mathbf{0}\right),\\
    I_{nN} =&  \bm W \bm W^\top+ \diag_{i\in\mathbf{N}_2}(\Pwgi^\top\Pwgi)+\Tilde{I},
\end{align*}
with $\Tilde{I}=\diag(I_{\sum_{i\in\mathbf{N}_2}(n-w_{g,i}^*)},\mathbf{0})$. Due to the fact that $\Pwgi \varrho_i$ ($i\in\mathbf{N}_1$) and $\Pwgi e_j$ ($j\in\mathbf{N}_2$) asymptotically converge to zero, ${W_i^*}^\top \varrho_i=0$ for all $i\in\mathbf{N}_1$, and $\bm W^\top(\mathcal{L}\otimes I_n)\Tilde{I}=0$, the convergence of \eqref{eq:Wve} is equivalent to
\begin{equation}
    \begin{aligned}
    &\bm W_V^\top\dot{\bar{e}} = \bm A_L \bm W_V^\top \bar{e} - \chi\bm W_V^\top(\mathcal{L}\otimes I_n)\bar{M}\bm W_V \bm W_V^\top \bar e \\
    &\quad\quad+ \bm W^\top\Bar{\bm B}\bar{\bm u}- \gamma\sign{\left(\bm W^\top(\mathcal{L}\otimes I_n)\bm W \bm W^\top e\right)}.
\end{aligned}\label{eq:Wv_e}
\end{equation}
Let $\varepsilon:=\bm W_V^\top \bar{e}$ and $\Tilde{\varepsilon}:=\bm W^\top e$, \eqref{eq:Wv_e} can be written as
\begin{equation}
    \begin{aligned}
    \dot \varepsilon =& \bm A_L \varepsilon - \chi\bm W_V^\top(\mathcal{L}\otimes I_n)\bar{M}\bm W_V \varepsilon \\
    &+ \bm W^\top\Bar{\bm B}\bar{\bm u}- \gamma\sign{\left(\bm W^\top(\mathcal{L}\otimes I_n)\bm W \Tilde{\varepsilon}\right)}.
\end{aligned}\label{eq:varepsilon}
\end{equation}
To proceed with the convergence of \eqref{eq:varepsilon}, we define the matrix $\bm Q:=\bm W_V^\top(\mathcal{L}\otimes I_n)\bm W_V$, which is positive definite under the Assumption~\ref{asm:connected} and~\ref{asm:geo_con} \cite[Lemma~4]{kim2019completely}. Then, we consider the Lyapunov function $V(\varepsilon) = \varepsilon^\top \bm{Q} \varepsilon$, which has the following time derivative
\begin{multline*}
    \hspace{-2.3mm}\dot V \!=\! 2\varepsilon^\top \!\!\left(\bm{Q}\!\left(\bm A_L\!-\! \chi\bar{\bm{Q}}\right)\right)\varepsilon
    +2\varepsilon^\top\!\!\bm{Q}\bm W^\top\!\Bar{\bm B}\bar{\bm u}\!-\!2\gamma\varepsilon^\top\!\bm{Q}\sign\!{\left(\Tilde{\bm Q} \Tilde{\varepsilon}\right)}\\
     \!=\! 2\varepsilon^\top\! \left(\bm{Q}\left(\bm A_L\! - \!\chi\bar{\bm{Q}}\right)\right)\varepsilon
    +2\Tilde{\varepsilon}^\top\!\Tilde{\bm{Q}}\bm W^\top\!\Bar{\bm B}\bar{\bm u} - 2\gamma\Tilde{\varepsilon}^\top\!\Tilde{\bm{Q}}\sign{\left(\Tilde{\bm Q} \Tilde{\varepsilon}\right)}
\end{multline*}
with $\Tilde{\bm Q}=\bm W^\top(\mathcal{L}\otimes I_n)\bm W$ and $\bar{\bm Q}=\bm{W}_V^\top(\mathcal{L}\otimes I_n)\bar{M}\bm{W}_V$. Recalling the definitions of $\bar M$ and $\bm Q$, and $\|M_i\| \leq 1$, it follows that $\|\bar{\bm Q}\bm{Q}^{-1}\|\leq 1$. Therefore, one can obtain
\begin{align*}
    % \begin{split}
        \dot{V}&\le 2\left(\left\|\bm{A}_L \bm{Q}^{-1}\right\|\!-\!\chi\|\bar{\bm Q}\bm{Q}^{-1}\|\right)\|\bm{Q}\varepsilon\|_2^2 \\
        &\quad + 2\left\|\bar{\bm u}^\top \bar{\bm B}^\top \bm{W}\Tilde{\bm{Q}}\varepsilon \right\|_{\infty} \!\!- \!2 \gamma\|\Tilde{\bm{Q}}\bar\varepsilon\|_1\\
       % &\le 2\left(\left\|\bm{A}_L\right\| \left\|\bm{Q}^{-1}\right\|-\chi\right)\|\bm{Q}\varepsilon\|_2^2 \\
       % &\quad+ 2\left\|\bar{\bm u}^\top \right\|_{\infty} \left\|\bar{\bm B}^\top \right\|_{\infty} \left\| \bm W \right\|_{\infty} \left\| \Tilde{\bm{Q}}\bar\varepsilon\right\|_{\infty} - 2 \gamma\|\Tilde{\bm{Q}}\bar\varepsilon\|_1\\
        &\le -2 \left(\chi - \frac{\left\|\bm{A}_L\right\|}{\sigma_{\min}\left(\bm{Q}\right)}\right)\|\bm{Q}\varepsilon\|_2^2 \\
        & - 2 \left( \gamma - \bar{u}_{\max}\max_{i\in\mathbf{N}_2}\left(\left\| \bar{B}_i \right\|_1\right)\max_{i\in\mathbf{N}_2}\left(\left\|W_{g,i}^*\right\|_{\infty}\right)\right)\|\Tilde{\bm{Q}}\bar\varepsilon\|_1.
    % \end{split}
\end{align*}
The conditions in \eqref{con:gain} ensure that $\dot{V}$ is negative definite. Consequently,  $\varepsilon$ will asymptotically converge to zero, as such $e$ converges to zero, which completes the proof.
\end{proof}
% \begin{remark}
When $\mathbf{N} = \mathbf{N}_1$, i.e., all nodes satisfy the local rank condition \eqref{eq:duio_rank_con}, our geometric condition in {Assumption~\ref{asm:geo_con}} simplifies to $\cap_{i \in \mathbf{N}} \image{V_i} = 0$, which coincides with the {\em Extensive Joint Detectable} condition established in \cite{yang2022state,cao2023distributed,cao2023distributedauto}. Conversely, when $\mathbf{N} = \mathbf{N}_2$, i.e., none of the nodes satisfy the local rank condition \eqref{eq:duio_rank_con}, Assumption~\ref{asm:geo_con} reduces to $\cap_{i \in \mathbf{N}} \W_{g,i}^* = 0$, corresponding to the {\em Extensive Joint Detectable} condition in \cite{zhao2025DUIO}. These results show that our proposed design unifies and extends existing frameworks, encompassing them as special cases.
% \end{remark}

\section{Simulation Results}\label{sec:simu}
\subsection{Centralized UIO}
Consider the following LTI system
\begin{equation*}
    A\!=\!{\setlength{\arraycolsep}{3pt}\begin{bmatrix}
2 & -2 & 0 \\ 
0 & 0 & 1 \\
0 & -2 & 1
\end{bmatrix}}\!,\ B\!=\!\begin{bmatrix}
    \acute B&\bar{B}
\end{bmatrix}=\left[\begin{array}{@{\hskip 0.1em}c;{2pt/2pt}c@{\hskip 0.1em}}
0 & 1 \\
0 & 1 \\
1 & 0
\end{array}\right]\!,\ C\!=\!{\setlength{\arraycolsep}{3pt}\begin{bmatrix}
1 & 0 & 0 \\
0 & 1 & 0
\end{bmatrix}},\ 
\end{equation*}
% \begin{equation*}
%     C=\begin{bmatrix}
% 1 & 0 & 0 \\
% 0 & 1 & 0
% \end{bmatrix},\ 
% % u = \matrices{\acute u\\ \bar{u}}=\begin{bmatrix}
% %     \mathtt{sin}(t)\\\hdashline[2pt/2pt] \mathtt{cos}(0.5t)
% % \end{bmatrix}
% \end{equation*}
with initial state $x_0=\matrices{1&2&3}^\top$ and inputs $u=\matrices{\acute u& \bar{u}}^\top$, where $\acute u=\mathtt{sin}(t)$ and $\bar{u}=\mathtt{cos}(0.5t)$. Following the centralized UIO design in Section~\ref{sec:centralized_UIO}, we can calculate the observer gains and matrices, which are provided in the supplementary document\footnote{\label{docm:parameter}\href{https://github.com/RuixuanZhaoEEEUCL/CDC2025.git}{https://github.com/RuixuanZhaoEEEUCL/CDC2025.git}} due to space limitations. 
% Let the initial auxiliary state $z_0=\matrices{0&0}^\top$. By applying the centralized UIO \eqref{eq:cen_uio},
Fig.~\ref{fig:estimation_errors_cen_uio} shows the simulation results, where the system states are reconstructed despite the presence of unknown inputs.
\begin{figure}
    \centering
    \begin{subfigure}[b]{0.24\textwidth}
        \centering
        \includegraphics[width=\textwidth]{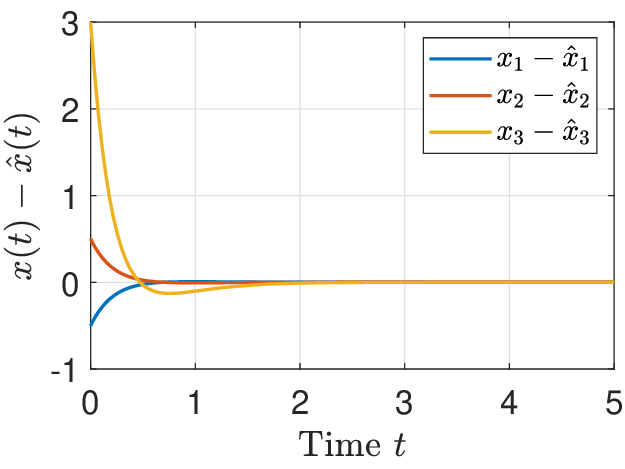}
        \\[-1.6ex]
        \caption{}
        \label{fig:estimation_errors_cen_uio}
    \end{subfigure}\hspace{-4pt}
    \begin{subfigure}[b]{0.24\textwidth}
        \centering
        \includegraphics[width=\textwidth]{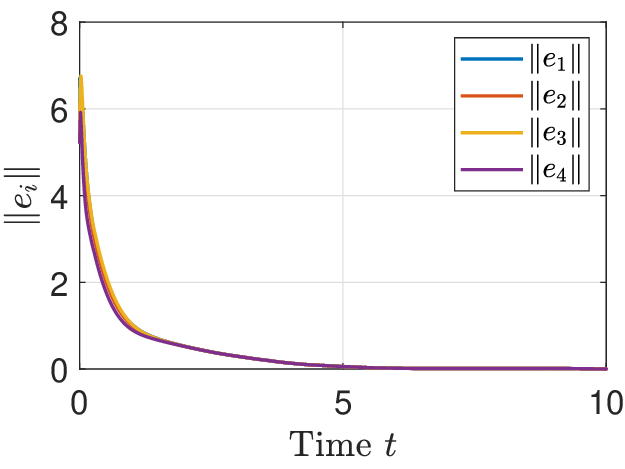}
        \\[-1.6ex]
        \caption{}
        \label{fig:estimation_errors_duio}
    \end{subfigure}
    % \subfloat[]{\includegraphics[width=0.235\textwidth]{Figures/cdc2025_cen_uio.eps}\label{fig:estimation_errors_cen_uio}}
    % \subfloat[]{\includegraphics[width=0.235\textwidth]{Figures/cdc2025_duio.eps}\label{fig:estimation_errors_duio}}
    \\[-1.2ex]
    \caption{(a) Estimation errors of the centralized UIO designed in \eqref{eq:cen_uio}. (b) Estimation errors of the distributed UIO designed by \eqref{eq:DUIO_1} and \eqref{eq:DUIO_2}.}
\end{figure}
\subsection{Distributed UIO}
Consider an LTI system measured by 4 sensor nodes, each node $i$ has access only to its local input $u_i$, local measurement $y_i$, and shared information from neighboring nodes via a communication network. Fig.~\ref{fig:network} illustrates the architecture of the distributed UIO with communication topology. System parameters are given as follows
\begin{align*}
 &A\!=\!{\setlength{\arraycolsep}{1.45pt}\begin{bmatrix}
0 & 3 & 0 & 0 & 0 & 0 \\
-2 & 0 & 1 & 0 & 0 & 0 \\
0 & 0 & 0 & 2 & 0 & 0 \\
0 & 0 & -3 & -2 & 0 & 0 \\ 
0 & 0 & 0 & 1 & 0 & -3 \\
0 & 2 & 0 & 0 & 4 & 0
\end{bmatrix}}\!,\ 
B^\top \!=\!\left[\scriptstyle\begin{array}{@{\hskip 0.1em}c@{\hskip 0.5em}c@{\hskip 0.5em}c@{\hskip 0.5em}c@{\hskip 0.5em}c@{\hskip 0.5em}c@{\hskip 0.1em}}
0 & 1 & 0 & 0 & 0 & 1 \\ \hdashline
0 & 0 & 0 & 1 & 0 & 0 \\ \hdashline
0 & 0 & 1 & 0 & 0 & 1
\end{array}\right]\!=\!\begin{bmatrix}
        \bm b_1^\top\\ \bm b_2^\top\\ \bm b_3^\top
    \end{bmatrix}\!,\\
    &C_1 = {\setlength{\arraycolsep}{4pt}\begin{bmatrix}
            1 & 0 & 0 & 0 & 0 & 0 \\
            0 & 0 & 1 & 0 & 0 & 0
    \end{bmatrix}},\ 
    C_3 = {\setlength{\arraycolsep}{4pt}\begin{bmatrix}
            0 & 0 & 1 & 0 & 0 & 0 \\
            0 & 1 & 0 & 0 & 0 & 0
    \end{bmatrix}},\\
    &C_2={\setlength{\arraycolsep}{4pt}\begin{bmatrix}
        0 & 1 & 0 & 0 & 1 & 0
    \end{bmatrix}},\quad
    C_4 = {\setlength{\arraycolsep}{4pt}\begin{bmatrix}
        1 & 1 & 0 & 0 & 0 & 0
    \end{bmatrix}},
\end{align*}
with initial state $x_0=\matrices{1 & 2 & 3 & -1 & -2 & -3}^\top$
and inputs $u(t)=\matrices{\mathtt{sin}(t)&0.2\mathtt{cos}(t)&0.2\mathtt{sin}(0.5t)}^\top$. The known and unknown input channels for each sensor node are given by 
% \begin{equation*}
$
    B_1 = \matrices{\bm b_1&\bm b_2},\ \bar B_1=\bm b_3,\ 
    B_2 = \matrices{\bm b_1&\bm b_3},\ \bar B_2=\bm b_2,\,
    B_3 = \matrices{\bm b_2&\bm b_3}=\bar{B}_4,\ \bar B_3=\bm b_1=B_4.
$
% \end{equation*}
and $u_i$, $\bar u_i$ can be characterized accordingly.
\begin{figure}[htp]
    \centering
    \scalebox{0.85}{\begin{tikzpicture}
\def\off{18}
\def\N{5}
\def\R{2.5}
\pgfmathparse{360/\N}
\edef\step{\pgfmathresult}

\colorlet{net}{teal}
% \tikzset{comm/.style = {-latex,color=net, very thick, dash pattern=on 4pt off 1.5pt}}
\tikzset{comm/.style = {color=cyan, very thick, dash pattern=on 4pt off 1.5pt}}

\node [circle, 
        draw, 
        color=purple, 
        fill=white, 
        text=black, 
        very thick,
        inner sep=6pt] (O1) at (2.5,1.5) {$\mathcal O_{1}$};
\node [circle, 
        draw, 
        color=net, 
        fill=white, 
        text=black, 
        very thick,
        inner sep=6pt] (O2) at (4,0.5) {$\mathcal O_{2}$};
\node [circle, 
        draw, 
        color=purple, 
        fill=white, 
        text=black, 
        very thick,
        inner sep=6pt] (O3) at (2, -0.5) {$\mathcal O_{3}$};
\node [circle, 
        draw, 
        color=net, 
        fill=white, 
        text=black, 
        very thick,
        inner sep=6pt] (O4) at (3.5, -1.5) {$\mathcal O_{4}$};
\draw [comm] (O1) -- (O2);
\draw [comm] (O2) -- (O3);
\draw [comm] (O3) -- (O4);
\draw[semithick, color=black] (1.2,-2.6) -- (4.8,-2.6) -- (4.8,2.3) -- (1.2,2.3) -- cycle;

\node[above] at (3,-2.6) {Distributed UIO};

\node [draw,
        rectangle,
        fill = lightgray,
        minimum width=1.2cm,
        minimum height=4cm] (sys) at (-1,0) {Plant}; 

\draw [-latex, semithick, color=gray] (-0.4,1.5) -- (O1) node[midway, above, align=left] {};
\draw [-latex, semithick, color=gray] (-0.4,0.5) -- (O2) node[midway, above, align=left] {};
\draw [-latex, semithick, color=gray] (-0.4,-0.5) -- (O3) node[midway, above, align=left] {};
\draw [-latex, semithick, color=gray] (-0.4,-1.5) -- (O4) node[midway, above, align=left] {};
\node[above] at (0.5,1.5) {$u_1$,\ $y_1$};
\node[above] at (0.5,0.5) {$u_2$,\ $y_2$};
\node[above] at (0.5,-0.5) {$u_3$,\ $y_3$};
\node[above] at (0.5,-1.5) {$u_4$,\ $y_4$};

\draw [-latex, semithick, color=gray] (O1) -- (6,1.5) node[midway, above, align=left] {};
\draw [-latex, semithick, color=gray] (O2) -- (6,0.5) node[midway, above, align=left] {};
\draw [-latex, semithick, color=gray] (O3) -- (6,-0.5) node[midway, above, align=left] {};
\draw [-latex, semithick, color=gray] (O4) -- (6,-1.5) node[midway, above, align=left] {};
\node[right] at (6,1.5) {$\hat{x}_1$};
\node[right] at (6,0.5) {$\hat{x}_2$};
\node[right] at (6,-0.5) {$\hat{x}_3$};
\node[right] at (6,-1.5) {$\hat{x}_4$};
\end{tikzpicture}}
    \\[-0.5ex]
    \caption{A distributed UIO example is considered, consisting of 4 sensor nodes, where $\mathbf{N}_1 = \{1,3\}$ and $\mathbf{N}_2 = \{2,4\}$. Each node communicates its local state estimates to neighboring nodes via a communication network represented by the cyan dashed lines.
    }
    \label{fig:network}
\end{figure}
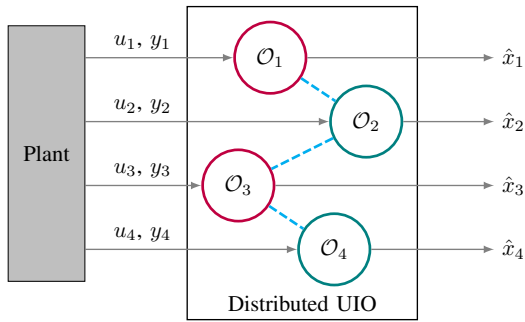
% \begin{figure}
%     \centering
%     \includegraphics[width=\columnwidth]{Figures/cdc2025_duio.eps}\\[-2ex]
%     \caption{Estimation errors of the distributed UIO designed by \eqref{eq:DUIO_1} and \eqref{eq:DUIO_2}.}
%     \label{fig:estimation_errors_duio}
% \end{figure}
According to the distributed UIO design in Section~\ref{sec:distributed_uio}, we can calculate observer gains and matrices, which are provided in the supplementary document\textsuperscript{\ref{docm:parameter}}. Fig.~\ref{fig:estimation_errors_duio} illustrates the asymptotic convergence of the estimation errors at each node, which validates the effectiveness of our design.

\section{Conclusion}\label{sec:conclusion}
In this paper, we propose a novel design methodology for unknown input observers (UIOs) tailored to linear time-invariant (LTI) systems, applicable to both centralized and distributed architectures. The proposed approach is grounded in the geometric approach through which we derive generalized conditions that extend beyond those found in the existing literature, thereby improving applicability, especially for distributed state estimation problems. The simulation results demonstrate the effectiveness of the proposed designs. Future work will focus on extending the framework to nonlinear systems and incorporating both uncertainties.

% \bibliographystyle{IEEEtran}
% \bibliography{ref}

\end{document}